\newtheorem{Lemma}{Lemma}
\newtheorem{Corollary}[Lemma]{Corollary}
\newtheorem{Remark}{Remark}
\newtheorem{proposition}{Proposition}
\DeclareMathOperator*{\argmax}{arg\,max}
\DeclareMathOperator*{\argmin}{arg\,min}
\newcommand{\qh}{{\bf h}}
\newcommand{\qC}{{\bf C}}
\newcommand{\qH}{{\bf H}}
\newcommand{\Ss}{\mathtt{S1}}
\newcommand{\Sss}{\mathtt{S2}}
\newcommand{\Ssss}{\mathtt{S3}}
\newcommand{\hRU}{\qh_{\mathtt{RU_1}}}
\newcommand{\hRUU}{\qh_{\mathtt{RU_2}}}
\newcommand{\HSIkj}{h_{\mathtt{RR}}^{k,j}}
\newcommand{\GSIkj}{\gamma_{\mathtt{SI}}^{k,j}}
\newcommand{\GSIkkj}{\gamma_{\mathtt{SI}}^{k^*,j}}
\newcommand{\bGSI}{\bar{\gamma}_{\mathtt{SI}}}
\newcommand{\HSI}{\qH_{\mathtt{RR}}}
\newcommand{\hBR}{\qH_{\mathtt{SR}}}
\newcommand{\HBRij}{h_{\mathtt{SR}}^{i,j}}
\newcommand{\HBRisj}{h_{\mathtt{SR}}^{i^*,j}}
\newcommand{\GBRij}{\gamma_{\mathtt{SR}}^{i,j}}
\newcommand{\GBRiij}{\gamma_{\mathtt{SR}}^{i^*,j}}
\newcommand{\bGBR}{\bar{\gamma}_{\mathtt{SR}}}
\newcommand{\HBNu}{h_{\mathtt{SU1}}^{i}}
\newcommand{\GSNu}{\gamma_{\mathtt{SU1}}^{i}}
\newcommand{\bGSNu}{\bar{\gamma}_{\mathtt{SU1}}}
\newcommand{\HRNu}{h_{\mathtt{RU1}}^{k}}
\newcommand{\GRNu}{\gamma_{\mathtt{RU1}}^{k}}
\newcommand{\bGRNu}{\bar{\gamma}_{\mathtt{RU1}}}
\newcommand{\HRFu}{h_{\mathtt{RU2}}^{k}}
\newcommand{\GRFu}{\gamma_{\mathtt{RU2}}^{k}}
\newcommand{\GRFui}{\gamma_{\mathtt{RU2,AS}}}
\newcommand{\GRFus}{\gamma_{\mathtt{RU2,S1}}}
\newcommand{\GRFuss}{\gamma_{\mathtt{RU2,S2}}}
\newcommand{\GMR}{\gamma_{\mathtt{R}}}
\newcommand{\GMRi}{\gamma_{\mathtt{R,AS}}}
\newcommand{\GMRS}{\gamma_{\mathtt{R,S1}}}
\newcommand{\GMRSS}{\gamma_{\mathtt{R,S2}}}
\newcommand{\bGRFu}{\bar{\gamma}_{\mathtt{RU2}}}
\newcommand{\SRFu}{\sigma^2_{\mathtt{RU2}}}
\newcommand{\SRNu}{k_1\sigma^2_{\mathtt{RU1}}}
\newcommand{\SBNu}{\sigma^2_{\mathtt{SU1}}}
\newcommand{\SBR}{{\sigma}^2_{\mathtt{SR}}}
\newcommand{\Pkk}{\mathtt{P}_k}
\newcommand{\Pii}{\mathtt{P}_i}
\newcommand{\Pjj}{\mathtt{P}_j}
\newcommand{\Pll}{\mathtt{P}_\ell}
\newcommand{\Pmm}{\mathtt{P}_m}
\newcommand{\cno}{\Psi(\theta_2)}
\newcommand{\cnox}{\Psi(x)}
\newcommand{\Snuu}{\sigma_{n_{2}}^2}
\newcommand{\Sn}{\sigma_n^2}
\newcommand{\Sap}{\sigma_{\mathtt{SI}}^2}
\newcommand{\SXY}{\sigma_{\mathtt{XY}}^2}
\newcommand{\PS}{P_{\mathtt{S}}}
\newcommand{\PR}{P_{\mathtt{R}}}
\newcommand{\SUu}{\text{U1}}
\newcommand{\SUuu}{\text{U2}}
\newcommand{\SUui}{\text{U}u}
\newcommand{\Rly}{\text{R}}
\newcommand{\Srs}{\mathtt{S}}
\newcommand{\Nu}{\mathtt{U1}}
\newcommand{\Fu}{\mathtt{U2}}
\newcommand{\Poutnu}{{\mathtt{P_{out,1}^{AS}}}}
\newcommand{\Poutfu}{{\mathtt{P_{out,2}^{AS}}}}
\newcommand{\PoutnuS}{{\mathtt{P_{out,1}^{S1}}}}
\newcommand{\PoutfuS}{{\mathtt{P_{out,2}^{S1}}}}
\newcommand{\PoutnuSS}{{\mathtt{P_{out,1}^{S2}}}}
\newcommand{\PoutfuSS}{{\mathtt{P_{out,2}^{S2}}}}
\newcommand{\PoutfuSSS}{{\mathtt{P_{out,2}^{S3}}}}
\newcommand{\PoutnuSSS}{{\mathtt{P_{out,1}^{S3}}}}
\newcommand{\MT}{M_{\mathtt{T}}}
\newcommand{\MR}{M_{\mathtt{R}}}
\newcommand{\MB}{N_{\mathtt{T}}}
\newcommand{\Prob}{\textnormal{Pr}}
\newcommand{\RnuAS}{{R}_{\SUu}^{\mathtt{AS}}}
\newcommand{\RfuAS}{{R}_{\SUuu}^{\mathtt{AS}}}
\newcommand{\be}{\begin{equation}} \newcommand{\ee}{\end{equation}}
\newcommand{\bea}{\begin{eqnarray}} \newcommand{\eea}{\end{eqnarray}}
\newcites{Prim}{Very important papers}
\definecolor{light-gray}{gray}{0.65}
\newcounter{mytempeqcounter}
\title{\fontsize{0.85cm}{1cm}\selectfont New Antenna Selection Schemes for Full-Duplex Cooperative MIMO-NOMA Systems
}
\author{\normalsize {Zahra Mobini,~\IEEEmembership{Member,~IEEE,}
Mohammadali Mohammadi,~\IEEEmembership{Member,~IEEE,}\\
Theodoros A. Tsiftsis,~\IEEEmembership{Senior Member,~IEEE,}
   Zhiguo Ding,~\IEEEmembership{Fellow,~IEEE}\\
  and Chintha Tellambura,~\IEEEmembership{Fellow,~IEEE.}}
  \thanks{
    Z. Mobini and M. Mohammadi  are with the Faculty of Engineering, Shahrekord University, Shahrekord 115, Iran
    (email: \{m.a.mohammadi, z.mobini\}@sku.ac.ir).}
  \thanks{
        Theodoros A. Tsiftsis is with the School of Intelligent Systems Science and
        Engineering, Jinan University, Zhuhai 519070, China, and also with the Department
        of Computer Science and Telecommunications, University of Thessaly,
        Lamia 35131, Greece (e-mail: theo\_tsiftsis@jnu.edu.cn). }
    \thanks{Z. Ding is with the School of Electrical and Electronic Engineering, the University of Manchester, Manchester, UK (e-mail: zhiguo.ding@manchester.ac.uk).}
    \thanks{ C. Tellambura is with the Department of Electrical and Computer Engineering,
University of Alberta, Edmonton, AB T6G 2V4 Canada  (email:
    {chintha@ece.ualberta.ca).}}
\thanks{%
Part of this work was presented at the IEEE International Conference on Communications (ICC 2018), Kansas City, MO, USA, May. 2018~\cite{Mohammadi:icc:2018}.
}
}
\begin{document}

\maketitle
\thispagestyle{empty}
\vspace{-2em}
\begin{abstract}
In this paper, we address the antenna selection (AS) problem in full-duplex (FD) cooperative non-orthogonal multiple access (NOMA) systems, where a multi-antenna FD relay bridges the connection between the multi-antenna base station and NOMA far user. Specifically, two AS schemes, namely max-$\SUu$ and max-$\SUuu$, are proposed to maximize the end-to-end signal-to-interference-plus-noise ratio at either or both near and far users, respectively. Moreover, a two-stage AS scheme, namely quality-of-service (QoS) provisioning  scheme, is designed to realize a specific rate at the far user while improving the near user's rate. To enhance the performance of the QoS provisioning AS scheme, the idea of dynamic antenna clustering is applied at the relay to adaptively partition the relay's antennas into transmit and receive subsets. The proposed AS schemes' exact outage probability and achievable rate expressions are derived. To provide more insight, closed-form asymptotic outage probability expressions for the max-$\SUu$ and max-$\SUuu$ AS schemes are obtained. Our results show that while the QoS provisioning AS scheme can deliver a near-optimal performance for static antenna setup at the relay,  it provides up to $12\%$ average sum rate gain over the optimum AS selection with fixed antenna setup.
\end{abstract}
\vspace{-0em}
\begin{IEEEkeywords}
Antenna selection (AS), non-orthogonal multiple access (NOMA), full-duplex (FD), quality-of-service (QoS) provisioning AS.
\end{IEEEkeywords}
\vspace{-2em}
\section{Introduction}
Driven by the unprecedented proliferation of versatile mobile devices and an upsurge in the growth of innovative applications, as well as the significant increasing demand for user access required for the Internet of Things (IoT), network developers and operators face challenges to deal with the tremendous data traffic requirements for the fifth generation (5G) networks and beyond. This requires a paradigm shift toward developing critical enabling technologies for future wireless networks. To this end, non-orthogonal multiple access (NOMA) has been well recognized as a   technique to improve spectral efficiency and provide universal connectivity challenges in 5G networks and beyond by  serving multiple users simultaneously over the same channel~\cite{Ding:MAG:2017,Dai:2018}. NOMA contrasts the current orthogonal multiple access (OMA) techniques that allocate one specific resource block to a specific user. In particular, a power-domain   NOMA  system multiplexes signals between users with strong channel conditions (denoted by near users) and users with weak channel conditions (denoted by far users) and uses the successive interference cancellation (SIC) method at the receivers to cancel the inter-user interference due to transmission over the same resource channel~\cite{Saito:VTC2013,Ding:Survay,Zhiguo:CLET:2015}.


In order to improve the reliability/throughput of NOMA systems, the cooperative relaying technique has been proposed to incorporate into NOMA systems, termed as \emph{cooperative NOMA} system. Cooperative NOMA systems are mainly classified into user-assisted cooperative NOMA and relay-assisted cooperative NOMA. In user-assisted cooperative NOMA, the near user assists the far user, exploiting the fact that the near user can decode the information for both users~\cite{Zhiguo:CLET:2015}. However, in the relay-assisted cooperative NOMA, the far user is being helped by a dedicated relay~\cite{Ding:2016:wcoml}. However, half-duplex (HD) relaying in such networks incurs spectral efficiency losses~\cite{Caijun:CLET:2016,Mobini:TWC:2019,Guizani:TCOM:2020,Abderrahmane:TCOM:2021}. Thus, one may recover this spectral loss by exploiting full-duplex (FD) relaying, where the relay node can receive and transmit simultaneously in the same frequency band. FD is also a popular choice for the 5G  wireless networks, and it has been shown in the literature that the marriage between FD and NOMA can boost the spectral efficiency~\cite{Mohammadi:MCOM,Caijun:CLET:2016,Yue:tcom:2017,Zhang:JSAC:2017,Mohammadi:TCOM:2018}. FD operation in a relay-assisted NOMA system brings two new challenges: inter-user interference at the near user and self-interference (SI) at the relay where the former is due to relaying the far user's information, and the later is due to the signal leakage from the relay's output to its input. Nevertheless, many techniques have been recently proposed to effectively suppress the SI~\cite{Sabharwal:TWC2012,Zhang:2015}.

Besides  FD operation and NOMA,  wireless  networks can also utilize multiple-input multiple-output (MIMO) technology to enhance system performance. The application of MIMO technology to NOMA systems has been investigated in~\cite{WCOM:2016,Vaezi:TCCN:2019,Silva:TCOM:2020}. However,  the penalty that is paid is the higher computational complexity as well as  the cost of hardware radio frequency chains,  which scales with the number of antennas~\cite{Molisch}. Antenna selection (AS) overcomes  this problem and  has been widely touted as a practical alternative, which reduces  the hardware overhead of traditional MIMO systems while  maintaining comparable performance.
As such, few recent works have investigated the application of AS in NOMA networks~\cite{Chen:CLET:2017,Lei:Access:2017,Vucetic:TVT:2018,Lei:TVT:2018,Tran:SYSJ:2020,Jaiswal:TVT:2021,Han:iet:2016, daCosta:TGCN:2018,Wang:ICC:2018,Serpedin:TVT:2017,Zhang:Access:2020,Aldababsa:TWC:2021}. However, there are still many challenges ahead for the NOMA systems with AS due to the complicated nature of near/far user performance criterion.

\subsection{Related Works}
AS problem has been investigated in MIMO NOMA~\cite{Chen:CLET:2017,Lei:Access:2017,Vucetic:TVT:2018,Lei:TVT:2018,Tran:SYSJ:2020,Jaiswal:TVT:2021} and cooperative NOMA~\cite{Han:iet:2016, daCosta:TGCN:2018,Wang:ICC:2018,Serpedin:TVT:2017,Zhang:Access:2020,Aldababsa:TWC:2021} systems.
In~\cite{Chen:CLET:2017} a  computationally efficient joint AS algorithm was proposed to maximize the signal-to-noise ratio (SNR) of the secondary user under the condition that the quality-of-service (QoS) of the primary user is satisfied. In~\cite{Lei:Access:2017}, the secrecy outage performance of two-user multiple-input single-output (MISO) NOMA systems with different transmit AS (TAS) schemes has been investigated. The authors in~\cite{Vucetic:TVT:2018} investigated joint transmit and receive AS problem for a two-user MIMO NOMA system for  fixed power allocation NOMA and cognitive radio-inspired NOMA (CR-NOMA) scenarios, where both the base station (BS) and users are equipped with multiple antennas. The TAS problem has been investigated in~\cite{Lei:TVT:2018} and~\cite{Tran:SYSJ:2020} to improve the secrecy performance of two-user and multi-user downlink MIMO-NOMA systems, respectively. The authors in~\cite{Jaiswal:TVT:2021}, derived the exact expressions for the outage probability, average bit error rate, and ergodic capacity of the MISO NOMA-based vehicular systems over double Nakagami-$m$ fading channels.

The outage performance of a cooperative NOMA system with a multiple-antenna energy harvesting relay has been analyzed
in~\cite{Han:iet:2016}, where  the transmit antenna that maximizes the instantaneous channel gain between the BS
and  relay  is selected at the BS.
In order to improve the performance of cell-edge users in two-user MISO-NOMA cooperative downlink transmissions, three cooperative schemes were proposed in~\cite{daCosta:TGCN:2018}, where different transmit AS criteria were adopted to provide flexible system design choices. In~\cite{Wang:ICC:2018}, the study in~\cite{Chen:CLET:2017} was extended to investigate the problem of joint relay and AS strategy for amplify-and-forward CR-NOMA systems. In~\cite{Serpedin:TVT:2017}, the outage performance of a NOMA-based downlink amplify-and-forward relay network over Nakagami-$m$ fading channels with imperfect channel state information (CSI) has been studied, where the BS and all users are equipped with multiple antennas. The authors in~\cite{Zhang:Access:2020} proposed a low-complexity relay and AS scheme for a wireless powered cooperative NOMA system, where the proposed AS scheme at the BS aims to maximize the harvested energy of the BS. Outage probability performance of a cooperative MIMO-NOMA system with the presence and absence of channel estimation errors and feedback delays has been examined in~\cite{Aldababsa:TWC:2021}.

Until now, a few previous works have tackled the combination of  AS with FD multi-antenna cooperative NOMA system~\cite{Mohammadi:icc:2018,Pei:TVT:2020}. Specifically, ~\cite{Mohammadi:icc:2018} develops several AS schemes for FD cooperative NOMA systems, aiming at maximizing the received signal-to-interference noise ratio (SINR) at near or far NOMA users. The authors in~\cite{Pei:TVT:2020} investigated the secrecy outage probability of an FD relay-assisted cooperative NOMA system under eavesdropping attack, where the BS and relay adopt AS to enhance the security performance.
However, best of the authors' knowledge, no work has addressed the AS problem in FD cooperative NOMA networks by taking the QoS requirement of the NOMA users into consideration.

\subsection{Contributions and Outcomes}
This paper develops three novel AS schemes in an FD cooperative NOMA system where a multi-antenna BS communicates with two single-antenna users using a multi-antenna FD relay. The fundamental challenge behind the integration of AS and FD NOMA
is the problem of  the SI~\cite{Sabharwal:TWC2012} and interference between the relay and near user~\cite{Guizani:TCOM:2020}, which should be considered during the selection of strong channels toward the near/far users. This makes the AS problem in FD NOMA systems much more challenging than in HD NOMA systems~\cite{Vucetic:ICC17}. Specifically, for the considered FD NOMA relay system with fixed (static) receive/transmit antenna configuration, we propose two low-complexity AS schemes to achieve near/far user end-to-end (\emph{e2e}) SINR maximization. Moreover, we design a QoS provisioning AS scheme, which ensures the far user's targeted data rate is realized while serving the near user with a rate as large as possible.
Our contributions can be summarized as follows:

\begin{itemize}
\item The performance of the FD cooperative NOMA system with the proposed  AS schemes is analyzed by deriving exact outage probability and achievable rate expressions of the NOMA far and near user. In addition, we present simple high-SNR approximations for the outage probability of the near and far users with max-$\SUu$, max-$\SUuu$ AS schemes. These results enable us to gain valuable insights on the impact of key system parameters on the outage performance  achieved by the proposed AS schemes.
\item To enhance the achievable rate performance of the QoS AS scheme, dynamic antenna clustering at the FD relay is considered. Specifically, the antennas at the relay are adaptively configured for reception and transmission, leading to more degrees of freedom for AS design. QoS provisioning AS scheme with both dynamic  antenna clustering and static antenna setup lead to efficient use of available radio resources by providing more degrees-of-freedom for interference reduction at the near user while ensuring the far user target rate.

\item Our findings reveal that QoS provisioning AS scheme with static antenna setup ensures the far user's target rate and provides the near-optimum average sum rate in the entire SNR range compared to the other proposed AS schemes. Moreover, the performance of the QoS provisioning AS scheme with dynamic antenna clustering is significantly improved, providing up to $12\%$ gain over the optimum AS scheme with static antenna setup at the relay.
\end{itemize}

\vspace{-0.5em}
\subsection{Organization and Notations}
The rest of the paper is organized as follows. In Section~\ref{sec:sys}, the system model and proposed AS schemes are presented. In Section~\ref{sec:perf}, the performance of the proposed AS schemes in terms of outage probability and the achievable rate is investigated. In Section~\ref{sec:DQoS}, QoS provisioning AS with dynamic antenna clustering is presented. In Section~\ref{sec:imp}, we elaborate on implementation details of the proposed AS schemes. Numerical results are presented in Section~\ref{sec:Num}, followed by conclusions in Section~\ref{sec:Conc}.

\begin{figure}
\centering
\vspace{0em}
\includegraphics[width=200mm, height=240mm]{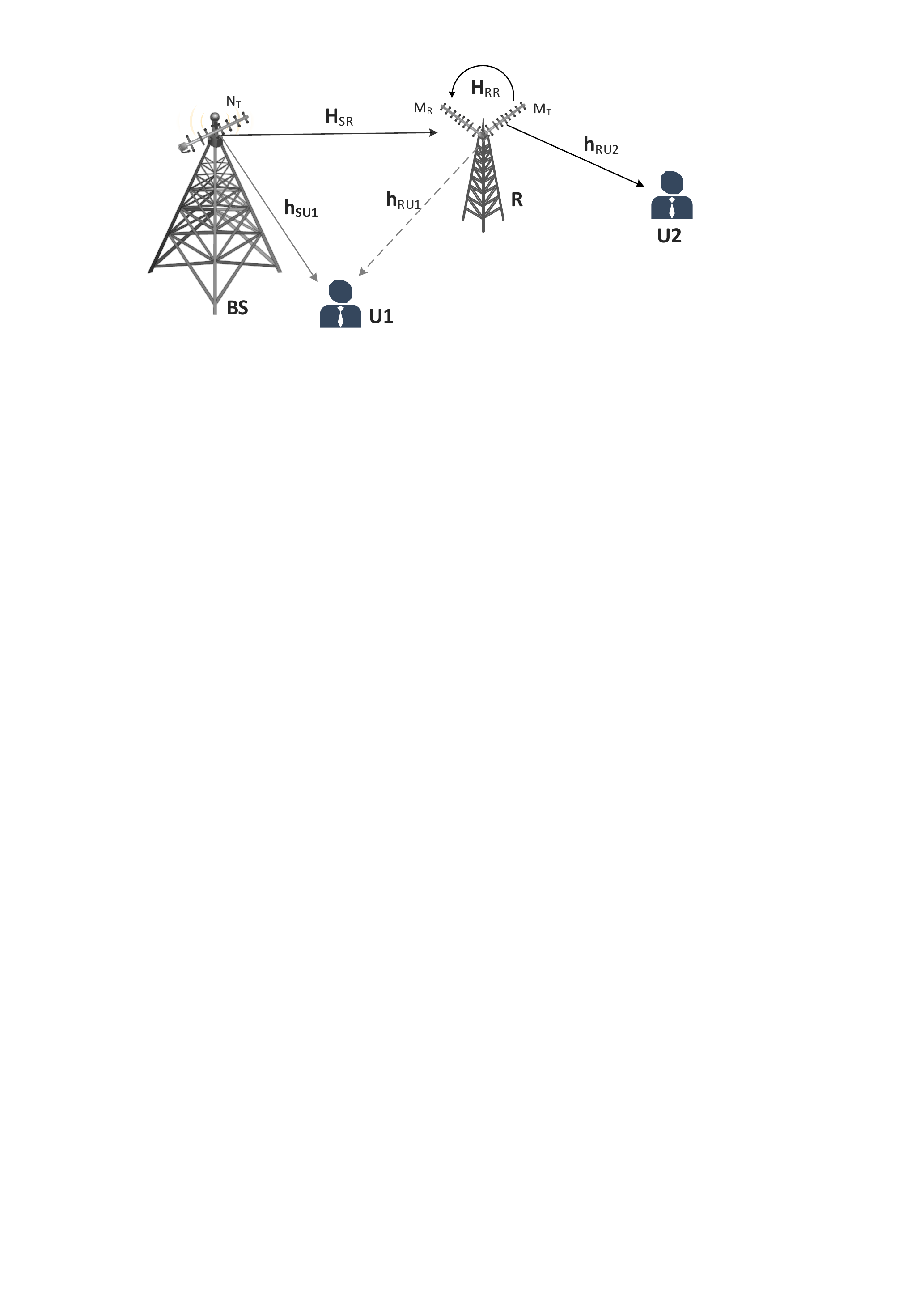}
\vspace{-43em}
\caption{ FD Cooperative NOMA system with antenna selection.}\label{fig:model}
\vspace{-1em}
\end{figure}

\emph{Notation:} We use $\mathbb{E}\left\{X\right\}$ to denote the expected value of the random variable (RV) $X$; its  probability density function (pdf) and  cumulative distribution function (cdf) are $f_X(\cdot)$ and $F_X(\cdot)$  respectively; $\mathcal{CN}(0,\sigma^2)$ denotes a circularly symmetric complex Gaussian RV $X$ with variance $\sigma^2$ and $\mathrm{E_i}(x)=\int_{-\infty}^{x}\frac{e^t}{t}dt$ is the exponential integral function~\cite[Eq. (8.211.1)]{Integral:Series:Ryzhik:1992}.
\vspace{-0.5em}
\section{System Model}\label{sec:sys}
We consider a downlink NOMA system consisting of a  BS, an FD relay, $\Rly$, and two NOMA users,  $\SUu$ (near user) and $\SUuu$ (far user) as shown in Fig.~\ref{fig:model}. The BS is equipped with $\MB$ antennas, $\Rly$ has an FD transceiver equipped with $M = \MR +\MT$ antennas of which $\MR$ antennas are devoted to receiving signals, and $\MT$ antennas are used for data transmission, while $\SUu$ and $\SUuu$ are equipped with a single antenna each. We assume that $\SUu$ communicates directly with the BS, while $\SUuu$ requires the assistance of $\Rly$  employing the decode-and-forward (DF) relaying protocol.

In order to reduce the implementation complexity of the system, we assume that a single AS is performed at the BS and $\Rly$~\cite{Molisch}. In particular, among $\MB$ available transmit antennas, the BS selects one antenna, e.g., $i$-th,   to perform downlink transmission. In addition, among $\MR$ ($\MT$)  available receive (transmit) antennas $\Rly$ selects one, e.g. $j$-th ($k$-th),   to receive signals (forward the BS signal to $\SUuu)$.

We assume that all wireless links  experience non-selective independent Rayleigh fading. The link corresponding to the $j$-th receive and the $i$-th transmit antenna from node $X$ to node $Y$, is denoted by $h_{XY}^{i,j}\sim\mathcal{CN}(0,\SXY)$ where $X \in\{\Srs,\Rly\}$ and $Y \in\{\Rly, \Nu,\Fu\}$.

Let $x_u$, $u\in\{1,2\}$ denotes the information signal intended for $\text{U}u$, and $a_u$ denotes the NOMA power allocation coefficient with  $a_1 + a_2 =1$ and $a_1< a_2$. Based on NOMA principle~\cite{Ding:Survay,Zhiguo:CLET:2015}, the superimposed
signals at the BS is given by
\vspace{-0.8em}
\begin{align}
s[n]=\sqrt{\PS a_1}x_1[n]+\sqrt{\PS a_2}x_2[n],
\end{align}
where $\PS$ is the BS transmit power.  Since, $\Rly$ is equipped with FD transceiver, it receives and transmits on the same channel. Hence, the received signal at $\Rly$ can be written as
\vspace{-0.3em}
\begin{align}
y_{\mathtt{R}}[n]&= \HBRij s[n]+\sqrt{\PR}\HSIkj x_2[n-\tau]+ n_{\mathtt{R}}[n],
\end{align}
where $\PR$ and $x_2[n-\tau]$  are the relay transmit power and relay transmit signal, respectively.   $\tau\geq1$ and $n_{\mathtt{R}}[n]\sim\mathcal{CN}(0,\Sn)$ denote the processing delay~\cite{Riihonen:JSP:2011} and   the additive white Gaussian noise (AWGN)  at  $\Rly$. Also, Rayleigh flat fading, a well accepted model in the literature~\cite{Riihonen:JSP:2011}, is used to model the  $\MR\times\MT$ SI channel $\HSI =\Big[\HSIkj\Big]_{k=1,\ldots, \MT}^{j=1,\ldots, \MR}$. Therefore, $\HSIkj$ is modeled as independent identically distributed (i.i.d.) $\mathcal{CN}(0,\Sap)$ RV.

Applying SIC and treating the symbol of $\SUu$ as interference~\cite{Caijun:CLET:2016}, $\Rly$ decodes the information intended for $\SUuu$. Therefore, the SINR at $\Rly$ can be expressed as
\vspace{-0.3em}
\begin{align}\label{eq:SINR relay}
\GMR=\frac{ a_2\GBRij}{a_1\GBRij \!+\!\! \GSIkj\!+1},
\end{align}
where $\GBRij=\rho_S|\HBRij|^2$ and $\GSIkj=\rho_R|\HSIkj|^2$ with $\rho_S=\frac{\PS}{\Sn}$ and $\rho_R=\frac{\PR}{\Sn}$.

Meanwhile, $\SUu$ receives two signals, one is the transmit signal from BS and the other is the inter-user interference caused by the FD operation at $\Rly$. Thus, the observation at $\SUu$ can be expressed as
\vspace{-0.5em}
\begin{align}\label{eq:sig:U1}
y_1[n]\!=\!\HBNu s[n]\!+\!\sqrt{\PR} \HRNu x_2[n\!-\!\tau]\!+\!n_1[n],
\end{align}
where $n_1[n]\sim\mathcal{CN}(0,\Sn)$ is the AWGN at $\SUu$.

Based on the NOMA principle, $\SUu$  employees SIC to fist decode  $x_2[n]$ of $\SUuu$, who has less interference. Then, after removing $x_2[n]$, $\SUu$ obtains its intended symbol $x_1[n]$.  Thus, the received SINR at $\SUu$ to detect $x_2[n]$ is given by
\vspace{-0.5em}
\begin{align}\label{eq:SINR UE2 at UE1}
\gamma_{12}&=\frac{ a_2\GSNu}{ a_1\GSNu +\GRNu+1},
\end{align}
where $\GSNu = \rho_S|\HBNu|^2$ and  $\GRNu=\rho_R|\HRNu|^2$. According to SIC at $\SUu$, it is rational to assume that the symbol, $x_{2}[n-\tau]$, is priory known to $\SUu$ and hence, $\SUu$ can  remove it~\cite{Caijun:CLET:2016}. Nevertheless, here, we consider realistic imperfect interference cancellation assumption wherein $\SUu$ cannot completely remove $x_{2}[n-\tau]$, due to the unknown interference channel between $\Rly$ and $\SUu$. We model $\HRNu \sim \mathcal{CN}(0,k_1\sigma^2_{\mathsf{RU1}})$ as the inter-user interference channel between $\Rly$ and $\SUu$, where the parameter $k_1$ controls the strength of the inter-user interference~\cite{Caijun:CLET:2016}. If $\SUu$ completely cancels the $\SUuu$'s signal, the SINR at $\SUu$ is given by
\vspace{-0.4em}
\begin{align}\label{eq:SINR at UE1}
\gamma_{1}&=\frac{ a_1\GSNu}{\GRNu+1}.
\end{align}

Relay forwards $x_2[n]$ to the $\SUuu$. Accordingly,  the received signal at $\SUuu$, from $\Rly$ can be expressed as
\begin{align}
y_2[n]=\sqrt{\PR}  \HRFu x_2[n-\tau]+n_2[n],
\end{align}
where $n_2[n]\sim\mathcal{CN}(0,\Snuu)$ denotes the AWGN at $\SUuu$. Hence, the received SNR at $\SUuu$ is given by
\vspace{-0.5em}
\begin{align}\label{eq:gamR2}
\GRFu=\frac{\PR}{\Snuu }|\HRFu|^2.
\end{align}

In order to obtain the \emph{e2e} SINR at $\SUuu$, we notice that the weakest link constraints the achievable rate of dual-hop DF relay network, ie., $\log_2(1+\min(\GMR,\GRFu))$.  Moreover, $\SUu$ must be able to  decode the intended signal for $\SUuu$~\cite{Caijun:CLET:2016}. Therefore,  \emph{e2e} SINR at $\SUuu$ can be expressed as
\begin{align}\label{eq:e-2-e far}
\gamma_{2}&=\min\!\left(\!\frac{ a_2\GSNu}{ a_1\GSNu+\GRNu+1},\frac{a_2\GBRij}{ a_1\GBRij+\GSIkj+1},
\GRFu\right).
\end{align}

\subsection{Antenna Selection Schemes}
In this subsection, we propose three AS schemes, called max-$\SUu$, max-$\SUuu$, and QoS provisioning, for the considered FD cooperative NOMA network where a joint selection of  transmit and receive antennas is performed at the BS and relay. More specifically, in the  max-$\SUu$   and max-$\SUuu$ AS scheme,  one single antenna from the BS and one pair of a receive-transmit antenna at the relay are selected based on the received \emph{e2e} SINRs at the near user, $\SUu$, and far user, $\SUuu$, respectively. We further propose a two-stage  QoS provisioning AS scheme to maximize the received SINR of $\SUu$ while guaranteeing a specific target rate at $\SUuu$.

\subsubsection{max-$\SUu$ AS Scheme}
In this one,  one transmit antenna at the BS and one transmit antenna at the relay are first selected such that the   \emph{e2e} SINR at $\SUu$, given in ~\eqref{eq:SINR at UE1}, is maximized. Then,  with remaining AS choices, one receive antenna at the relay is selected to maximize the \emph{e2e} SINR at $\SUuu$. Thus, max-$\SUu$ AS can be mathematically written as
\begin{align}\label{eq:ASc near}
\{i^*,  k^*\} =\argmax_{\substack{1\leq i\leq\MB,  1\leq k\leq\MT}}\frac{ a_1\GSNu}{\GRNu+1},\qquad
j^*=\argmax_{\substack{ 1\leq j\leq\MR}}\frac{a_2\GBRiij}{ a_1\GBRiij + \GSIkkj+1}.
\end{align}
\subsubsection{max-$\SUuu$ AS Scheme}
Here, one single antenna from the BS and one pair of receive-transmit antenna at the relay are selected to maximize the \emph{e2e} SINR at $\SUuu$ according to
\vspace{-0.1em}
\begin{align}\label{eq:ASc far}
\{i^*, j^*, k^*\} &= \argmax_{\substack{1\leq i\leq\MB, 1\leq j\leq\MR,\\ 1\leq k\leq\MT}}
\min\left(\frac{ a_2\GSNu}{ a_1\GSNu \!+\GRNu\!+1},\frac{a_2\GBRij}{ a_1\GBRij + \GSIkj+1},
\GRFu\right).
\end{align}
Based on~\eqref{eq:ASc far} all degree-of-freedoms in terms of AS (i.e., BS transmit antenna, and relay's receive and transmit antenna) are used to maximize the \emph{e2e} at $\SUuu$, hence there is no  degree-of-freedom available to maximize the \emph{e2e} SINR at $\SUu$. Therefore, under similar conditions the received SINR at the near user in max-$\SUu$ AS scheme outperforms the received SINR at the near user in max-$\SUuu$ AS scheme.

\subsubsection{QoS Provisioning AS Scheme}
In the aforementioned AS schemes, either the $\SUu$'s SINR or $\SUuu$'s SINR is maximized without considering the other user's QoS requirement. The main object of this AS scheme is to simultaneously ensure $\SUuu$'s targeted data rate and serve $\SUu$ with a rate as large as possible. Therefore, the QoS provisioning AS scheme is a two-stage AS, described in more detail as follows.\\\\

In the first stage, the following set of all antenna's combinations $\{i, j, k\}$ is created:
\vspace{-0.2em}
\begin{align}\label{eq:TwoStage:far}
\mathcal{A}&= \Bigg\{ \{i, j, k\}: \min\Bigg(\frac{ a_2\GSNu}{ a_1\GSNu \!+\GRNu\!+1},\frac{a_2\GBRij}{ a_1\GBRij + \GSIkj+1}
\GRFu\Bigg)\geq2^{\mathcal{R}_2}-1\Bigg\},\nonumber\\
&\hspace{2em} \forall\quad 1\leq i\leq\MB, 1\leq j\leq\MR, 1\leq k\leq\MT,
\end{align}
where $\mathcal{R}_2$ is the target transmission rate at $\SUuu$. Therefore, if $\mathcal{A}$ is non empty, the intended signal for $\SUuu$ can be decoded at the relay while the intended signal for $\SUu$ is treated as noise. In the second stage, antenna set $\{i^*, j^*, k^*\}\in\mathcal{A}$ which maximizes the rate for $\SUu$ is selected as follows. By invoking the received SINR at $\SUu$, given in~\eqref{eq:SINR at UE1}, the best transmit antennas at the BS and Relay are selected such that $\gamma_{1}$ is maximized, i.e.,
\vspace{-1em}
\begin{align}\label{eq:TwoStage:near}
\{i^*, k^*\} &= \argmax_{\mathcal{A}}\left\{\frac{ a_1\GSNu}{\GRNu+1}, \quad \{i, j, k\} \in \mathcal{A}\right\}.
\end{align}
Accordingly, for the given $i^*$, $k^*$ combination in~\eqref{eq:TwoStage:near}, the received antenna at $\Rly$ is selected, such that the received SINR at $\Rly$ is maximized, i.e.,
\vspace{-0.2em}
\begin{align}\label{eq:TwoStage:far}
j^*=&\argmax_{\substack{\mathcal{A} }}
\left\{\frac{a_2\GBRiij}{ a_1\GBRiij + \GSIkkj+1},\quad \{i^*, j, k^*\} \in \mathcal{A} \right\}.
\end{align}

We note here that based on~\eqref{eq:TwoStage:far}, one receive antenna at $\Rly$ is selected from those antenna combinations that lies in $\mathcal{A}$ and maximizes the rate for $\SUu$.

Section~\ref{sec:DQoS} describes the QoS provisioning AS scheme with dynamic antenna clustering in more detail.

\section{Performance Evaluation}~\label{sec:perf}
This section investigates the system performance with the proposed AS schemes. We focus on two information-theoretic criteria, namely outage probability and achievable rate, and  derive closed-form  expressions for evaluating and comparing the proposed AS schemes.

\vspace{-0.3em}
\subsection{Outage Probability Analysis}
Outage probability is an effective performance measure of the communication systems operation over fading channels. The outage probability is defined as the event that the data rate supported by instantaneous channel realizations is less than a targeted user rate.

In order to derive the outage probability expressions for the max-$\SUu$ and max-$\SUuu$ AS schemes, we notice that an outage event at $\SUu$ occurs in one of the following cases: \emph{1)} the intended information signal for $\SUuu$ cannot be decoded correctly by $\SUu$   \emph{2)} $\SUu$  can decode it but fails to decode its signal. Thus, the $\SUu$ outage performance of can be expressed as
\vspace{-0.3em}
\begin{align}\label{eq:outnear}
\Poutnu&=1-\Prob\left(\gamma_{12,\mathtt{AS}}>\theta_2,
\gamma_{1,\mathtt{AS}}>\theta_1\right)= F_{\gamma_{1,\mathtt{AS}}}\left(\zeta\right),
\end{align}
where $\gamma_{1,\mathtt{AS}}=\frac{\GSNu}{\GRNu+1}$, $\zeta={ \max\left(\cno,\frac{\theta_1}{a_1} \right)}$ with $\cno=\frac{\theta_2}{a_2-a_1\theta_2}$, $\theta_1 = 2^{\mathcal{R}_1}-1$, $\theta_2 = 2^{\mathcal{R}_2}-1$, $\mathcal{R}_1$ is the transmission rates at $\SUu$, and $\mathtt{AS}\in\{\Ss, \Sss\}$.\footnote{Hereafter, we use the superscripts/subscripts $\mathtt{AS}\in\{\Ss, \Sss, \Ssss\}$ in related variables to denote the max-$\SUu$, max-$\SUuu$ and QoS provisioning AS schemes, respectively.}

Moreover, the outage event at $\SUuu$ with max-$\SUu$ and max-$\SUuu$ AS schemes  is due to the following cases: 1)  $\Rly$ cannot correctly decode the intended message to $\SUuu$, \emph{2)}  $\SUuu$ signal can be decoded by $\Rly$   but $\SUuu$ fails to decode its own information. Therefore, the outage probability of $\SUuu$ can be expressed as
\vspace{-0.3em}
\begin{align}\label{eq:outfar}
\Poutfu
&=1-\Prob\left(\GMRi>\theta_2\right)\Prob\left(
\GRFui>\theta_2\right).
\end{align}

Now we characterize the outage probability of both user with max-$\SUu$, max-$\SUuu$, and QoS provisioning AS schemes.

\begin{proposition}~\label{outageU1:S1S2}
The outage probability of $\SUu$ with max-$\SUu$ AS scheme,  denoted by $\PoutnuS$,  and the outage probability of max-$\SUuu$ AS scheme, denoted by $\PoutnuSS$, are respectively given by
\vspace{-0.3em}
\begin{align}\label{eq:outNEARS1}
&\PoutnuS=1- \!\MB\sum_{p=0}^{\MB-1}
\frac{ (-1)^p\binom{\MB-1}{p} e^{-\frac{(p+1)\zeta}{\bGSNu}}}
{(p+1)\left(1 +\frac{\bGRNu}{\bGSNu}\frac{(p+1) \zeta}{\MT}\right)},
\end{align}
and
\vspace{-1.1em}
\begin{align}\label{eq:outNEARS2}
&\PoutnuSS
=1- \frac{e^{-\frac{\zeta}{\bGSNu}}}{1+\frac{\bGRNu}{\bGSNu}\zeta},
\end{align}
where   $\bGSNu=\rho_S\SBNu$ and $\bGRNu=\rho_R\SRNu$.
\end{proposition}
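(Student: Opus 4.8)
The plan is to evaluate both probabilities straight from $\PoutnuS=F_{\gamma_{1,\Ss}}(\zeta)$ and $\PoutnuSS=F_{\gamma_{1,\Sss}}(\zeta)$ as dictated by \eqref{eq:outnear}; the two schemes differ only through the statistics of the selected near-user gains, so the whole task reduces to finding the distribution of $\gamma_{1,\mathtt{AS}}=\GSNu/(\GRNu+1)$ under each selection rule. For max-$\SUu$ I would first note that the objective in \eqref{eq:ASc near} factorizes: $\GSNu$ depends only on the BS index $i$ and $\GRNu$ only on the relay-transmit index $k$, the constant $a_1$ does not affect the $\argmax$, and hence
\[
\gamma_{1,\Ss}=\max_{i,k}\frac{\GSNu}{\GRNu+1}=\frac{\max_{1\le i\le\MB}\GSNu}{\min_{1\le k\le\MT}\GRNu+1}.
\]
The numerator is then the maximum of $\MB$ i.i.d. exponential gains of mean $\bGSNu$, with CDF $(1-e^{-x/\bGSNu})^{\MB}$, while the random part of the denominator is the minimum of $\MT$ i.i.d. exponentials of mean $\bGRNu$, i.e. exponential of mean $\bGRNu/\MT$.

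Conditioning on the denominator and integrating,
\[
F_{\gamma_{1,\Ss}}(\zeta)=\int_0^\infty\Big(1-e^{-\zeta(y+1)/\bGSNu}\Big)^{\MB}\frac{\MT}{\bGRNu}e^{-\MT y/\bGRNu}\,dy,
\]
I would expand the bracket by the binomial theorem and integrate term by term; each term is a single decaying exponential in $y$, so the integral is elementary and produces the factor $1/\big(1+\tfrac{\bGRNu}{\bGSNu}\tfrac{p\zeta}{\MT}\big)$. The $p=0$ term contributes $1$; after pulling it out and reindexing the remaining sum with the identity $\binom{\MB}{p+1}=\tfrac{\MB}{p+1}\binom{\MB-1}{p}$, the range $\sum_{p=1}^{\MB}$ becomes the $\MB\sum_{p=0}^{\MB-1}$ form of \eqref{eq:outNEARS1}, with the $(p+1)$ prefactor and exponent arising automatically.

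For max-$\SUuu$ the decisive observation---already anticipated in the remark following \eqref{eq:ASc far}---is that every AS degree of freedom is consumed in maximizing the far user's \emph{e2e} SINR, so the near user enjoys no selection diversity. The selected indices $i^\ast,k^\ast$ are governed by the far-user-specific gains $\GBRij$, $\GSIkj$ and $\GRFu$, which are independent of the near-user gains $\GSNu$ and $\GRNu$; consequently the selected $\GSNu$ and $\GRNu$ retain their single-antenna exponential marginals of means $\bGSNu$ and $\bGRNu$. Repeating the integral above with $\MB=1$ and $\MT=1$ collapses the binomial sum to a single term and directly reproduces \eqref{eq:outNEARS2}.

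The main obstacle is exactly this independence claim for the max-$\SUuu$ case: the near-user decoding term $a_2\GSNu/(a_1\GSNu+\GRNu+1)$ lies inside the $\min$ of \eqref{eq:ASc far} and does depend on $\GSNu$ and $\GRNu$, so the selected indices are not strictly independent of the near-user channels. I would therefore need to argue that, because $a_2>a_1$ loads most of the power onto $\SUuu$, this term is essentially never the binding constraint of the $\min$, so that the selection is effectively driven only by the remaining (mutually independent) links and the single-antenna statistics hold. By contrast, the max-$\SUu$ derivation is a routine, if careful, exercise in order statistics and binomial bookkeeping.
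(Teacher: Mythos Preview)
Your derivation of $\PoutnuS$ coincides with the paper's: the same factorization of the objective in \eqref{eq:ASc near} into $\max_i\GSNu$ and $\min_k\GRNu$, the same conditioning integral on the minimum, and the same binomial expansion leading to \eqref{eq:outNEARS1}.

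For $\PoutnuSS$ the final computation is also the paper's, but the way you and the paper justify the single-antenna exponential marginals differs. You correctly flag that the literal joint $\argmax$ in \eqref{eq:ASc far} contains $\gamma_{12}$, which depends on $\GSNu$ and $\GRNu$, and you then try to rescue independence by a heuristic ``this term is essentially never the binding constraint'' argument. The paper does not go that route. As made explicit in the proof of Proposition~\ref{outageU2:S1S2}, the max-$\SUuu$ scheme is actually analyzed as a decoupled selection: $k^\ast$ maximizes $\GRFu$, $j^\ast$ minimizes the SI gain for that $k^\ast$, and $i^\ast$ maximizes the BS--$\Rly$ gain for that $j^\ast$. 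None of these criteria involves $\GSNu$ or $\GRNu$, so the selected near-user gains are \emph{exactly} exponential with means $\bGSNu$ and $\bGRNu$, and \eqref{eq:outNEARS2} follows rigorously. Your heuristic is therefore unnecessary and, taken at face value, would leave a genuine gap; replacing it with the paper's decoupled implementation of \eqref{eq:ASc far} closes the argument cleanly.
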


\begin{proof}
See Appendix~\ref{Appendix:outageU1:S1S2}.
\end{proof}

\begin{Remark}
Proposition~\ref{outageU1:S1S2} indicates that the outage probability of $\SUu$ with  max-$\SUu$ scheme decays by increasing  $\MB$ and $\MT$. In contrast, the outage probability of $\SUu$ with the max-$\SUuu$ scheme is independent of the number of antennas at the BS and relay. This can be interpreted as the max-$\SUuu$ scheme utilizing all degree-of-freedom to maximize the \emph{e2e} at the $\SUuu$, and hence no degree-of-freedom is remained to maximize the \emph{e2e} SINR at $\SUu$. Therefore, max-$\SUuu$ reduces to the random AS scheme, from the $\SUu$'s point of view.
\end{Remark}

\begin{proposition}~\label{outageU2:S1S2}
The outage probability of $\SUuu$ with max-$\SUu$ AS scheme, $\PoutfuS$,  and the outage probability of $\SUuu$ with max-$\SUuu$ AS scheme,   $\PoutfuSS$,  for $\frac{a_2}{a_1}>\theta_2$, are respectively given by
\begin{align}
\PoutfuS&=1-\MR e^{-\frac{\theta_2}{\bGRFu}}\sum_{q=0}^{\MR-1}\!\frac{(-1)^q\binom{\MR-1}{q}
e^{-\frac{(q+1)\cno}{\bGBR}}}
{(q+1)\left(1+\!\frac{\bGSI}{\bGBR}(q+1)\cno\right)},
\label{eq:outnuS2}
\end{align}
and
\vspace{-1em}
\begin{align}
\PoutfuSS&=1-\MT\MB\!\sum_{q=0}^{\MT-1}
\frac{ (-1)^q\binom{\MT-1}{q} e^{-\frac{(q+1)\theta_2}{\bGRFu}}}
{(q+1)}\sum_{p=0}^{\MB-1}\frac{(-1)^p\binom{\MB-1}{p}e^{-\frac{(p+1)\cno}{\bGBR}}}
{(p+1)\left(1+\frac{\bGSI}{\bGBR}\frac{(p+1)\cno}{\MR}\right)}\label{eq:outfuS2},
\end{align}
where  $\bGBR = \rho_S\SBR$, $\bGRFu=\rho_R\SRFu$ and $\bGSI=\rho_R \Sap$.
\end{proposition}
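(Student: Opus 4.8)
The plan is to take the factored form~\eqref{eq:outfar}, $\Poutfu=1-\Prob(\GMRi>\theta_2)\,\Prob(\GRFui>\theta_2)$, as the starting point and to evaluate the two complementary CDFs separately under each selection rule. The first move, which is exactly where the hypothesis $\tfrac{a_2}{a_1}>\theta_2$ (equivalently $a_2-a_1\theta_2>0$) is used, is to linearize the relay-decoding event: $\frac{a_2\,\GBRij}{a_1\,\GBRij+\GSIkj+1}>\theta_2 \iff \GBRij>\cno\bigl(\GSIkj+1\bigr)$, with $\cno=\frac{\theta_2}{a_2-a_1\theta_2}$. After this step each probability has the generic shape $\Prob\!\left(W>\cno(V+1)\right)$, where $W$ is the selected desired BS$\to\Rly$ gain and $V$ the accompanying residual SI. I would evaluate it by inserting the CDF of $W$, expanding it with the binomial theorem, and integrating termwise against the exponential density of $V$; this single integral is what generates the $(q+1)$ and $(p+1)$ factors appearing in the denominators.

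For the max-$\SUu$ scheme I would first note that the transmit indices $i^*,k^*$ are pinned down by the near-user criterion~\eqref{eq:ASc near}, which depends only on the BS$\to\SUu$ and $\Rly\to\SUu$ channels. These are independent of the BS$\to\Rly$, SI, and $\Rly\to\SUuu$ channels, so conditioning on $i^*,k^*$ leaves the latter with their unconditioned exponential laws; in particular $\GRFui$ carries no selection gain and $\Prob(\GRFui>\theta_2)=e^{-\theta_2/\bGRFu}$. The remaining receive-antenna selection makes the desired gain $W=\max_{1\le j\le\MR}\GBRiij$ the maximum of $\MR$ i.i.d.\ exponentials with CDF $(1-e^{-w/\bGBR})^{\MR}$, while the SI $V$ at the chosen antenna stays a single exponential of mean $\bGSI$. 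Expanding $(1-e^{-\cno(v+1)/\bGBR})^{\MR}$ binomially, integrating against $\tfrac{1}{\bGSI}e^{-v/\bGSI}$, and reindexing $q\mapsto q+1$ via $\binom{\MR}{q+1}=\tfrac{\MR}{q+1}\binom{\MR-1}{q}$ reproduces~\eqref{eq:outnuS2}.

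For the max-$\SUuu$ scheme the three indices jointly maximize the \emph{e2e} SINR, and I would resolve the joint rule by decoupling it across the degrees of freedom it targets: the BS antenna maximizes the BS$\to\Rly$ gain, so $W=\max_{1\le i\le\MB}\GBRij$ becomes the maximum of $\MB$ exponentials; the relay transmit antenna maximizes the forwarding gain, so $\GRFui=\max_{1\le k\le\MT}\GRFu$ is the maximum of $\MT$ exponentials and yields the outer $q$-sum; and the relay receive antenna minimizes the SI, so $V=\min_{1\le j\le\MR}\GSIkj$ is exponential with mean $\bGSI/\MR$, i.e.\ rate $\MR/\bGSI$. Repeating the same expand-and-integrate step with this sharper $V$ is precisely what inserts the extra $1/\MR$ into the factor $1+\tfrac{\bGSI}{\bGBR}\tfrac{(p+1)\cno}{\MR}$; multiplying the two resulting CCDFs and subtracting from one gives~\eqref{eq:outfuS2}.

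The hard part is not the integration, which is routine, but correctly identifying the effective distributions that each selection rule induces. For max-$\SUu$ this means proving that the near-user-driven choice of $i^*,k^*$ is statistically independent of the far-user links, so that no diversity order leaks into $\GRFui$. For max-$\SUuu$ it means justifying the decoupling so that the desired gain, the SI, and the forwarding gain collapse to a max of $\MB$, a min of $\MR$, and a max of $\MT$ exponentials respectively; the min-of-SI step, which downscales the mean to $\bGSI/\MR$, is the subtle ingredient that the final expression hinges on. I would also check that the factorization assumed in~\eqref{eq:outfar}, in which the cross term $\gamma_{12}$ is absorbed into the near-user analysis rather than tightening the far-user outage, is consistent with treating $\Prob(\GMRi>\theta_2)$ and $\Prob(\GRFui>\theta_2)$ as independent.
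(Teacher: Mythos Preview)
Your proposal is correct and follows essentially the same route as the paper: start from the factored outage~\eqref{eq:outfar}, linearize the relay-decoding event via $\cno$, and evaluate the relay-link probability by integrating the (binomially expanded) cdf of the selected desired gain against the exponential density of the residual SI. Your identification of the effective distributions---for max-$\SUu$: $W=\max$ of $\MR$ exponentials, $V$ a single exponential, $\GRFus$ a single exponential; for max-$\SUuu$: $W=\max$ of $\MB$, $V=\min$ of $\MR$ (mean $\bGSI/\MR$), $\GRFuss=\max$ of $\MT$---matches the paper exactly.

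The one nuance worth absorbing is how the paper frames the decoupling you rely on. For max-$\SUu$, the receive-antenna rule in~\eqref{eq:ASc near} maximizes the full ratio $\frac{a_2\GBRiij}{a_1\GBRiij+\GSIkkj+1}$, in which the BS$\to\Rly$ gain and the SI are coupled through $j$; the paper explicitly states that the exact cdf under this rule is intractable and \emph{replaces} it by the sub-optimal rule ``pick $j$ to maximize $\GBRiij$ alone,'' then argues (numerically) that the loss is negligible. Likewise for max-$\SUuu$, the paper does not derive the three-way decoupling from~\eqref{eq:ASc far} but imposes it as a tractable surrogate (max over $k$ for the second hop, min over $j$ for SI, then max over $i$ for BS$\to\Rly$). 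Your last paragraph correctly flags that this decoupling needs justification; the paper's justification is simply to adopt these surrogate rules, so the closed forms in the proposition are, strictly speaking, for those surrogate selections rather than for the literal $\argmax$ in~\eqref{eq:ASc near}--\eqref{eq:ASc far}.
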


\begin{proof}
See Appendix~\ref{Appendix:outageU2:S1S2}.
\end{proof}

Proposition~\ref{outageU2:S1S2} presents the  impact of different system parameters on the outage performance of  $\SUuu$ with max-$\SUu$  and max-$\SUuu$ AS schemes. Specifically, it reveals that $\PoutfuS$ only depends on the number of relay's receive antennas, $\MR$,  while $\PoutfuSS$ is a function of the number of antennas at both BS and FD relay.

To gain further insights on the diversity order of the max-$\SUu$  and max-$\SUuu$ AS schemes, we now look into the high-SNR regime and derive simple approximations for the outage probability of $\SUu$ and $\SUuu$ with these schemes.

\begin{Corollary}~\label{Corr:S1S2}
For any $\rho_S$ and $\rho_R$  such that $\frac{\rho_R}{\rho_S} =c_1$ is fixed, as $\rho_R$, $\rho_S\rightarrow\infty$, the outage probability of $\SUu$ with max-$\SUu$ and max-$\SUuu$ AS scheme, can be respectively approximated as
\vspace{-0.1em}
\begin{align}\label{eq:outNEARS1:APX}
&\PoutnuS \approx
\frac{\Gamma(\MB+1)}{\MT^{\MB}}
\left(\frac{c_1\SRNu}{\SBNu}\zeta\right)^{\!\!\MB},
\end{align}
and
\vspace{-1em}
\begin{align}\label{eq:outNEARS2:APX}
&\PoutnuSS
\approx\frac{\frac{\bGRNu}{\bGSNu}\zeta}{1+\frac{\bGRNu}{\bGSNu}\zeta}\left(1 + \frac{1}{\bGRNu}\right)
\approx\frac{\zeta}{\frac{\SBNu}{c_1\SRNu}+\zeta}.
\end{align}
Moreover, the outage probability of $\SUuu$ with max-$\SUu$ and max-$\SUuu$ AS scheme, can be respectively approximated as
\vspace{-0.4em}
\begin{align}
\PoutfuS&\approx
\Gamma(\MR+1)\left(\cno\frac{c_1\Sap}{\SBR}\right)^{\!\!\MR},
\label{eq:outnuS2:Apx}
\end{align}
and
\vspace{-1em}
\begin{align}
\PoutfuSS&\approx
  \Gamma(\MB+1)\left(\frac{\cno}{\MR}\frac{c_1\Sap}{\SBR}\right)^{\!\!\MB}.
\label{eq:outfuS2:Apx}
\end{align}
\end{Corollary}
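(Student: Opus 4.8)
The plan is to obtain all four approximations directly from the exact outage expressions in Propositions~\ref{outageU1:S1S2} and~\ref{outageU2:S1S2} by letting $\rho_S,\rho_R\to\infty$ with $\rho_R/\rho_S=c_1$ held fixed. The crucial observation is that the signal/first-hop averages $\bGSNu$, $\bGBR$ and $\bGRFu$ all grow without bound, whereas the interference-to-signal ratios $\bGRNu/\bGSNu=c_1\SRNu/\SBNu$ and $\bGSI/\bGBR=c_1\Sap/\SBR$ stay constant. Hence every exponential factor in~\eqref{eq:outNEARS1},~\eqref{eq:outNEARS2},~\eqref{eq:outnuS2} and~\eqref{eq:outfuS2} (whose arguments are of order $1/\bGSNu$, $1/\bGBR$ or $1/\bGRFu$) tends to $1$, while the rational factors survive. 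This is precisely what makes the schemes interference-limited, so the resulting expressions are leading-order values of a nonzero outage floor rather than quantities vanishing with SNR.

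The technical core is a single binomial-sum identity. After setting the exponentials to $1$, each of~\eqref{eq:outNEARS1},~\eqref{eq:outnuS2} and~\eqref{eq:outfuS2} reduces to the form $1-n\sum_{p=0}^{n-1}\frac{(-1)^p\binom{n-1}{p}}{(p+1)\left(1+a(p+1)\right)}$, where the pair $(n,a)$ equals $(\MB,\tfrac{\bGRNu}{\bGSNu}\tfrac{\zeta}{\MT})$, $(\MR,\tfrac{\bGSI}{\bGBR}\cno)$ and $(\MB,\tfrac{\bGSI}{\bGBR}\tfrac{\cno}{\MR})$ for the three cases respectively. I would first collapse the prefactor using $\frac{n}{p+1}\binom{n-1}{p}=\binom{n}{p+1}$ and re-index to obtain $\sum_{q=0}^{n}(-1)^q\binom{n}{q}\frac{1}{1+aq}$, then apply the integral representation $\frac{1}{1+aq}=\int_0^1 t^{aq}\,dt$ to write this floor as
\begin{align}
\int_0^1\left(1-t^{a}\right)^{n}dt.
\end{align}
For small $a$ one has $1-t^{a}\approx -a\ln t$, and since $\int_0^1(-\ln t)^{n}\,dt=\Gamma(n+1)$ the leading term is $\Gamma(n+1)\,a^{n}$. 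Substituting the three pairs $(n,a)$ and inserting $\bGRNu/\bGSNu=c_1\SRNu/\SBNu$ and $\bGSI/\bGBR=c_1\Sap/\SBR$ then yields~\eqref{eq:outNEARS1:APX},~\eqref{eq:outnuS2:Apx} and~\eqref{eq:outfuS2:Apx}. An equivalent route that exposes the same leading order is to expand $\frac{1}{1+aq}=\sum_{m\ge0}(-aq)^m$ and invoke the finite-difference identity $\sum_{q=0}^{n}(-1)^q\binom{n}{q}q^{m}$, which is $0$ for $m<n$ and $(-1)^n n!$ for $m=n$, so that all contributions below order $a^{n}$ cancel.

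The doubly-indexed expression~\eqref{eq:outfuS2} needs one extra step: once the exponentials go to $1$, the outer sum becomes $\MT\sum_{q=0}^{\MT-1}\frac{(-1)^q\binom{\MT-1}{q}}{q+1}=\sum_{q=1}^{\MT}(-1)^{q-1}\binom{\MT}{q}=1$, so it contributes only a unit factor and the floor is governed by the inner $p$-sum, which is the identity above with $(n,a)=(\MB,\tfrac{\bGSI}{\bGBR}\tfrac{\cno}{\MR})$. Finally,~\eqref{eq:outNEARS2} contains no sum; here I would substitute $e^{-\zeta/\bGSNu}\approx1-\zeta/\bGSNu$, combine over a common denominator to reach the first form in~\eqref{eq:outNEARS2:APX}, and then drop the $1/\bGRNu$ term as $\bGRNu\to\infty$ to reach the compact second form. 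The main obstacle is establishing the leading-order behavior of the alternating binomial sum cleanly, that is, showing that all terms of order lower than $a^{n}$ cancel and that the surviving coefficient is exactly $\Gamma(n+1)$; this is where the exponents $\MB$ and $\MR$ in the final expressions originate, and it is best proved once as a lemma and then applied to each case.
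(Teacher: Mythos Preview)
Your proposal is correct. The paper itself omits the proof entirely (``The proof is straightforward and thus omitted for the sake of brevity''), so there is no argument to compare against; your derivation supplies what the paper left out.

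Your decomposition is the right one: first let $\rho_S,\rho_R\to\infty$ with $\rho_R/\rho_S=c_1$ fixed so that every exponential in~\eqref{eq:outNEARS1}--\eqref{eq:outfuS2} collapses to $1$, then evaluate the residual alternating binomial sums. The re-indexing $\frac{n}{p+1}\binom{n-1}{p}=\binom{n}{p+1}$ followed by the integral representation $\sum_{q=0}^{n}(-1)^q\binom{n}{q}\frac{1}{1+aq}=\int_0^1(1-t^a)^n\,dt$ is clean, and the collapse of the outer $q$-sum in~\eqref{eq:outfuS2} to unity is handled correctly. Your treatment of~\eqref{eq:outNEARS2} via $e^{-\zeta/\bGSNu}\approx 1-\zeta/\bGSNu$ reproduces both displayed forms in~\eqref{eq:outNEARS2:APX}.

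One point worth making explicit in a final write-up: the step ``for small $a$, $1-t^a\approx -a\ln t$'' is an additional approximation beyond the high-SNR limit, since $a$ (being a ratio of interference to signal variances times a fixed threshold) does \emph{not} vanish as $\rho_S,\rho_R\to\infty$. The exact floor is $\int_0^1(1-t^a)^n\,dt=\frac{1}{a}B(1/a,n+1)$, and the corollary's expressions $\Gamma(n+1)a^n$ are its leading term when the interference ratio is small. This is implicit in the paper's statement as well, so it is not a gap in your argument; but flagging it shows you understand exactly which regime the approximation captures.
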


\begin{proof}
The proof is straightforward and thus omitted for the sake of brevity.
\end{proof}

\begin{Remark}
By inspecting Corollary~\ref{Corr:S1S2}, we find  that max-$\SUu$ and max-$\SUuu$ AS scheme suffer from zero-order diversity for both users. This is intuitive due to the presence of the inter-user interference and SI at the relay, which cannot be completely eliminated through the AS selection. Moreover,  max-$\SUu$  AS scheme provides lower floor for $\SUu$, while max-$\SUuu$ AS scheme results in lower floor for $\SUuu$ for the same number of BS's transmit antenna and relay's receive antenna.
\end{Remark}

In what follows, we provide key results for the outage probability of the  QoS provisioning AS Scheme with fixed (static) antenna setup at $\Rly$.

\begin{proposition}~\label{prop:outage:S3}
For the QoS provisioning AS scheme, the outage probability of $\SUu$, $\PoutnuSSS$, and the outage probability of $\SUuu$,  $\PoutfuSSS$, are given by
\vspace{-0.3em}
\begin{align}\label{eq:Pout:U1:Qos}
\PoutnuSSS&=(\Pkk)^{\MT} +
\sum_{r=1}^{\MT}\binom{\MT}{r}
(\Pkk)^{\MT-r} (1-\Pkk)^{r}
(\Pii)^{\MB}\nonumber\\
&\hspace{1em}+
\sum_{r=1}^{\MT}
\sum_{t=1}^{\MB}
\binom{\MT}{r}
\binom{\MB}{t}
(\Pkk)^{\MT-r} (1-\Pkk)^{r}
(\Pii)^{\MB-t} (1-\Pii)^{t}
(\Pjj)^{t\MR}\nonumber\\
&\hspace{1em}+
\sum_{r=1}^{\MB\MT}
\!
\sum_{t=1}^{\MB}
\!
\sum_{\substack{s=1 \\ s \times t = r \\ \text{or}~ s=t=r}}^{\MT}
\!
\binom{\MB\MT}{r}(\Pll)^r (1\!-\Pll)^{\MB\MT-r}
\Pmm,
\end{align}
and
\vspace{-1em}
\begin{align}\label{eq:Pout:U2:Qos}
\PoutfuSSS= &(\Pkk)^{\MT} +
\sum_{r=1}^{\MT}\binom{\MT}{r}
(\Pkk)^{\MT-r} (1-\Pkk)^{r}
(\Pii)^{\MB}\nonumber\\
&\hspace{2em}+
\sum_{r=1}^{\MT}
\sum_{t=1}^{\MB}
\binom{\MT}{r}
\binom{\MB}{t}
(\Pkk)^{\MT-r} (1-\Pkk)^{r}
(\Pii)^{\MB-t} (1-\Pii)^{t}
(\Pjj)^{t\MR},
\end{align}
respectively, where
\vspace{-1em}
\begin{subequations}
\begin{align}
\Pii &=1-e^{-\frac{\cno}{\bGSNu}} \left(\frac{\cno \bGRNu}{r\bGSNu}+1\right)^{-1},\label{eq:Pii}\\
\Pjj &=1-e^{-\frac{\cno}{\bGBR}} \left(\frac{\cno\bGSI}{\bGBR}+1\right)^{-1},\label{eq:Pjj}\\
\Pkk &=1 - e^{-\frac{\theta_2}{\bGRFu}},\label{eq:Pkk}\\
\Pll &=\frac{e^{-\left(\frac{\cno}{\bGSNu}+\frac{\cno}{\bGBR}+\frac{\theta_2}{\bGRFu}\right)}}
{\bGRFu\left(\frac{\cno\bGRNu}{\bGSNu}+1\right)\left(\frac{\cno\bGSI}{\bGBR}+1\right)},\label{eq:Pll}\\
\Pmm &=1\!-\! t\sum_{p=0}^{t-1}
\frac{ (-1)^p\binom{t-1}{p} e^{-\frac{(p+1)\zeta}{\bGSNu}}}
{(p\!+\!1)\left(1 \!+\! \frac{(p+1)\bGRNu \zeta}{s \bGSNu}\right)}\label{eq:Pmm}.
\end{align}
\end{subequations}
\end{proposition}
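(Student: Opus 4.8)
The plan is to translate both outage events into statements about the stage-one set $\mathcal{A}$ defined in~\eqref{eq:TwoStage:far}, and then to evaluate the resulting probabilities by exploiting the mutual independence of the five channel families $\{\HRFu\}_k$, $\{\HBNu\}_i$, $\{\HRNu\}_k$, $\{\HBRij\}_{i,j}$ and $\{\HSIkj\}_{k,j}$. Because every combination placed in $\mathcal{A}$ satisfies $\gamma_2\ge\theta_2$ by construction, the far user is in outage under this scheme \emph{exactly} when $\mathcal{A}=\emptyset$, so $\PoutfuSSS=\Prob(\mathcal{A}=\emptyset)$. The near user, on the other hand, is in outage either when $\mathcal{A}=\emptyset$ (no combination can serve it) or when $\mathcal{A}\neq\emptyset$ but the selected pair still yields $\gamma_{1,\mathtt{AS}}=\frac{\GSNu}{\GRNu+1}\le\zeta$. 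This dichotomy already explains the claimed structure: $\PoutnuSSS$ is the sum of the three terms making up $\PoutfuSSS$ (the event $\{\mathcal{A}=\emptyset\}$) plus one additional term covering the second case.

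Next I would compute $\Prob(\mathcal{A}=\emptyset)$ by a cascade conditioning that mirrors the three constraints in~\eqref{eq:TwoStage:far}. First condition on the number $r$ of relay transmit antennas passing the relay-to-$\SUuu$ link, $\GRFu\ge\theta_2$; this count is $\mathrm{Binomial}(\MT,1-\Pkk)$ with $\Pkk=\Prob(\GRFu<\theta_2)$, and $r=0$ produces the leading term $(\Pkk)^{\MT}$. Given $r\ge 1$ surviving transmit antennas, a fixed BS antenna $i$ fails the $\gamma_{12}$ requirement against all of them iff $\GSNu<\cno(\GRNu+1)$ for each passing $k$; since $\GRNu$ is independent of $\GRFu$, the retained antennas keep i.i.d.\ exponential $\GRNu$, so this failure probability is obtained by integrating the survival function of the minimum of $r$ exponentials (rate scaled by $r$, hence the $\frac{\cno\bGRNu}{r\bGSNu}$ in the denominator) against the density of $\GSNu$, giving exactly $\Pii$. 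Conditioning on the number $t$ of BS antennas that survive this stage contributes $\binom{\MB}{t}(1-\Pii)^t(\Pii)^{\MB-t}$, with $t=0$ producing the second term. Finally, membership still needs the relay-reception constraint $\GMR\ge\theta_2$; the corresponding $t\MR$ receive-antenna tests each fail with probability $\Pjj=\Prob(\GMR<\theta_2)$, yielding the factor $(\Pjj)^{t\MR}$ in the third term.

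For the extra near-user term I would first observe that, since $\zeta=\max\!\left(\cno,\frac{\theta_1}{a_1}\right)$ and every member of $\mathcal{A}$ already guarantees $\gamma_{1,\mathtt{AS}}\ge\cno$, this term is active only when $\frac{\theta_1}{a_1}>\cno$, i.e.\ the far-user QoS is met but the near user's own rate target is not. The plan is then to enumerate how the ``good'' $(i,k)$ pairs lying in $\mathcal{A}$ populate the $\MB\times\MT$ grid: $r$ good pairs spanning $t$ distinct BS antennas and $s$ distinct transmit antennas with $ts=r$ (the degenerate case $s=t=r$ handling the single-pair arrangement). Each pair carries the joint per-combination weight $\Pll$, and conditioned on such a configuration the near user's max-SINR outage over the $t$ usable BS and $s$ usable transmit antennas is structurally identical to the max-$\SUu$ result of Proposition~\ref{outageU1:S1S2} under the substitution $\MB\to t$, $\MT\to s$, which is precisely $\Pmm$. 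Summing $\binom{\MB\MT}{r}(\Pll)^r(1-\Pll)^{\MB\MT-r}\,\Pmm$ over all admissible $(r,t,s)$ delivers the last term.

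The main obstacle is this final term, because $\gamma_{1,\mathtt{AS}}$ and membership in $\mathcal{A}$ both depend on the same variables $\GSNu$ and $\GRNu$, so the event cannot be factored naively into an independent ``count the good pairs'' binomial multiplied by a max-SINR CDF. Handling this dependence correctly is what should produce the density-type factor $\tfrac{1}{\bGRFu}$ inside $\Pll$ and force the max-$\SUu$ CDF in $\Pmm$ to be evaluated at $\zeta$ rather than at $\cno$. Concretely, I would first verify that the cascade decomposition of $\{\mathcal{A}=\emptyset\}$ is a genuine disjoint partition across the $r$ and $t$ strata (no double counting), then treat the coupling in the last term by conditioning on $\GSNu,\GRNu$ before integrating, and finally check the grid bookkeeping $ts=r$ accounts for every arrangement of the good pairs. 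The secondary difficulty is justifying the independence used in the $(\Pjj)^{t\MR}$ factor, since a surviving BS antenna may pair with several transmit antennas sharing the same $\HSIkj$; confirming that the relay-reception tests factor as claimed is the remaining point to secure.
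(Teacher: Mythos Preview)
Your plan is essentially the paper's own argument: the paper also identifies $\PoutfuSSS=\Prob(\mathcal{A}=\emptyset)$, partitions $\{\mathcal{A}=\emptyset\}$ into the three cascaded events $\mathcal{O}_1,\mathcal{O}_2,\mathcal{O}_3$ (no good $\Rly$--$\SUuu$ link; $r$ good links but no BS antenna passes $\gamma_{12}$; $r$ and $t$ survive but no receive antenna passes $\GMR$), and then writes $\PoutnuSSS=\Prob(|\mathcal{A}|=0)+\sum_{r}\mathcal{P}_3(r)\Prob(|\mathcal{A}|=r)$ with $\Prob(|\mathcal{A}|=r)=\binom{\MB\MT}{r}\Pll^{\,r}(1-\Pll)^{\MB\MT-r}$ and $\mathcal{P}_3(r)$ obtained from the max-$\SUu$ CDF under $\MB\to t$, $\MT\to s$.

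The one point worth flagging is that the obstacles you single out---the coupling between $\gamma_{1,\mathtt{AS}}$ and membership in $\mathcal{A}$ through the shared variables $\GSNu,\GRNu$, and the potential dependence among the $t\MR$ relay-reception tests through shared $\HSIkj$---are \emph{not} resolved in the paper either. The paper simply asserts the binomial form for $\Prob(|\mathcal{A}|=r)$ and the unconditioned max-$\SUu$ CDF for $\mathcal{P}_3(r)$ without addressing the dependence you correctly identify; in other words, the factorizations you are worried about are taken as working approximations (validated by the simulation match in Section~\ref{sec:Num}) rather than proved. So if you execute your plan with those same simplifications you will reproduce the paper's derivation exactly, but you should not expect to find an argument that rigorously disentangles the coupling---the paper does not provide one.
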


\begin{proof}
We first derive the outage probability of $\SUuu$, since the outage probability of $\SUu$ is defined based on $\SUuu$. An outage at $\SUuu$ occurs if set $\mathcal{A}$ is empty. By invoking~\eqref{eq:TwoStage:far}, it is clear that set $\mathcal{A}$ is empty if the following conditions are satisfied: 1) $\mathcal{O}_1$: when there is no active link between $\Rly$ and $\SUuu$, i.e., $\GRFu<2^{\mathcal{R}_2}-1$, $\forall~1\leq k\leq\MT$, 2) $\mathcal{O}_2$: when there is $1\leq k\leq\MT$ transmit antenna at the relay which guarantees $\GRFu\geq2^{\mathcal{R}_2}-1$, but there is not any transmit antenna at the BS, such that antenna combination $\{i,k\}$ satisfies $ \frac{a_2\GSNu}{ a_1\GSNu+\GRNu+1}\geq2^{\mathcal{R}_2}-1$, $\forall 1<i<\MB$, 3) $\mathcal{O}_3$: when there is at least one transmit antenna at the BS and relay, but there is not any active BS-$\Rly$ link to satisfy $\frac{a_2\GBRij}{ a_1\GBRij+\GSIkj+1}\geq2^{\mathcal{R}_2}-1$. Therefore, the outage probability of $\SUuu$ can be expressed as
\vspace{-0.6em}
\begin{align}\label{eq:Pout:U2:def}
\PoutfuSSS=\Prob(\mathcal{O}_1)+\Prob(\mathcal{O}_2)+\Prob(\mathcal{O}_3).
\end{align}
Now, we proceed to derive three probabilities on the right hand side of~\eqref{eq:Pout:U2:def}. Considering the fact that all $\Rly$-$\SUuu$ links are independent, $\Prob(\mathcal{O}_1)$ can be readily obtained as
 \vspace{-0.4em}
\begin{align}\label{eq:Pro1}
\Prob(\mathcal{O}_1) &= \Pi_{k=1}^{\MT}\Prob\left(\GRFu<2^{\mathcal{R}_2}-1\right)\nonumber\\
&= \left(1 - e^{-\frac{\theta_2}{\bGRFu}}\right)^{\MT} = (\Pkk)^{\MT}.
\end{align}

In order to derive $\Prob(\mathcal{O}_2)$, we notice that based on the proposed QoS provisioning AS scheme, from $1\leq k\leq\MT$ available $\Rly$-$\SUuu$ links, the weakest one is selected. Therefore, $\Pii$ can be obtained as
 \vspace{-0.3em}
\begin{align}\label{eq:Pro2}
\Prob(\mathcal{O}_2) = \sum_{r=1}^{\MT}\Prob(k=r)
\left(\underbrace{\Prob\Bigg(\frac{a_2\GSNu}{ a_1\GSNu+\min_{1\leq k \leq r}\GRNu +1}
\leq2^{\mathcal{R}_2}-1\Bigg)}_{\triangleq \Pii}\right)^{\MB},
\end{align}
where $\Prob(k=r) = \binom{\MT}{r}\Pkk^{\MT-r}(1-\Pkk)^{r}$. Next, by utilizing the pdf of $A_3 \triangleq\min_{1\leq k \leq r}\GRNu$, i.e., the minimum of $1\leq k \leq r$ exponentially distributed RVs with parameter $\bGSNu$, and the cdf of exponential RV $\GSNu$ with parameter $\bGSNu$, we can obtain $\Pii$ as
 \vspace{-0.0em}
\begin{align}\label{eq:Pii}
\Pii
&=1-e^{-\frac{\cno}{\bGSNu}} \left(\frac{\cno \bGRNu}{r\bGSNu}+1\right)^{-1}.
\end{align}

Finally, given that there are $1\leq k\leq\MT$ available $\Rly$-$\SUuu$ link and $1\leq i\leq\MB$ available BS-$\SUu$ links, we have
\vspace{-1.3em}
\begin{align}\label{eq:Pro3}
\Prob(\mathcal{O}_3) = \sum_{r=1}^{\MT}\sum_{t=1}^{\MB}\Prob(k=r)\Prob(i=t)
\left(\underbrace{\Prob\Bigg(\frac{a_2\GBRij}{ a_1\GBRij+\GSIkj+1}\geq2^{\mathcal{R}_2}-1\Bigg)}_{\triangleq \Pjj}\right)^{t\MR},
\end{align}
where $\Prob(i=t) = \binom{\MB}{t}\Pii^{\MB-t}(1-\Pii)^{t}$. Noticing that $\GBRij$ and $\GSIkj$ are exponential RVs with parameters $\bGBR$ and $\bGSI$, respectively, we can reedily obtain $\Pjj$ as~\eqref{eq:Pjj}. To this end, substituting~\eqref{eq:Pro1},~\eqref{eq:Pro2}, and~\eqref{eq:Pro1} into~\eqref{eq:Pout:U2:def}, the desired result in~\eqref{eq:Pout:U2:Qos} is derived.

To derive the outage probability of $\SUu$, according to~\eqref{eq:TwoStage:near}, the received SINR at $\SUu$ is maximized when the strongest BS-$\SUu$ link along with the weakest $\Rly$-$\SUu$ link is selected provided that the links have been already selected in $\mathcal{A}$ according to~\eqref{eq:TwoStage:near}. Therefore, the outage event at $\SUu$ occurs either when antenna set $\mathcal{A}$ is empty or the received SINR at $\SUu$ falls bellow the predefined threshold, i.e.,
\vspace{-0.1em}
\begin{align}\label{eq:outageS3U1}
\PoutnuSSS&= \Prob(|\mathcal{A}|=0) +
\sum_{r=1}^{K_1}\underbrace{\Prob\left(\frac{ \GSNu}{\GRNu+1}<\zeta\bigg| |\mathcal{A}|=r\right)}_{\mathcal{P}_3(r)}
\Prob\left(|\mathcal{A}|=r\right),
\end{align}
where $|\mathcal{A}|$ denotes the cardinality of set $\mathcal{A}$, and $K_1=\MB\MT$. It is clear that $\Prob(|\mathcal{A}|=0) = \PoutfuSSS$.  To calculate~\eqref{eq:outageS3U1}, we now analyze $\mathcal{P}_3(r)$ and $\Prob\left(|\mathcal{A}|=r\right)$ in the following.

Based on the selection criterion of $\mathcal{A}$ in~\eqref{eq:TwoStage:far}, we notice that since only $i$ and $k$ (and not $j$) contribute in $\mathcal{P}_3(r)$, we have $1\leq r\leq K_1$ possible choices in $\mathcal{A}$. Let us define $\mathcal{A}_{r}=\{1\leq i\leq \MB,~1\leq k \leq \MT,~i,k\in\mathcal{A}: k \times i = r,~\text{or}~i=k=r\}$ as subsets of $\mathcal{A}$, containing all possible antenna combinations that contribute in the received SINR at $\SUu$. Therefore, $\mathcal{P}_3(r)$ can be written as
\vspace{-0.4em}
\begin{align}\label{eq:pr:P3}
\mathcal{P}_3(r) &=
\Prob\Bigg( \bigcup_{\substack{\mathcal{A}_{r}}} \frac{ \GSNu}{\GRNu+1}\leq \zeta\Bigg)
\nonumber\\
&=\sum_{i=1}^{\MB}\sum_{\substack{k=1 \\ k \times i = r\\ \text{or}~k=i=r}}^{\MT} \int_{0}^{\infty} F_{A_i}\left((y+1)\zeta\right)f_{B_k}(y)dy,
\end{align}
where $A_i$  is a RV defined as the maximum out of $i$  exponentially distributed independent RVs, while $B_k$ is the minimum out of $k$ exponentially distributed independent RVs. Therefore, using the cdf of $A_i$ and the pdf of $B_k$, after some manipulation, we get
\vspace{-0.3em}
\begin{align}\label{eq:pr:P3lfinal1}
\mathcal{P}_3(r) &=
\sum_{i=1}^{\MB}\sum_{\substack{k=1 \\ k \times i = r\\ \text{or}~i=k=r}}^{\MT}
\left( 1- i\sum_{p=0}^{i-1}
\frac{ (-1)^p\binom{i-1}{p} e^{-\frac{(p+1)\zeta}{\bGSNu}}}
{(p+1)\left(1 + \frac{(p+1)\bGRNu \zeta}{k \bGSNu}\right)}\!\right).
\end{align}

Moreover, we have
\vspace{-0.7em}
\begin{align}\label{eq:pr:A:nnz}
\Prob\left(|\mathcal{A}|=r\right)& =\binom{K_1}{r}
\prod_{n=1}^{K_1-r}\left[1-\Prob\left(\gamma_{12}\geq \theta_2\right)\Prob\left(\GMR\geq \theta_2\right)\Prob\left(
\GRFu\geq \theta_2\right)\right]\nonumber\\
&\hspace{6em}\times\prod_{n=K_1-r+1}^{K_1}\Prob\left(\gamma_{12}\geq \theta_2\right)\Prob\left(\GMR\geq \theta_2\right)\Prob\left(
\GRFu\geq \theta_2\right)\nonumber\\
&=\binom{K_1}{r} \Pll^r (1-\Pll)^{K_1-r},
\end{align}
where $\Pll=\Prob\left(\gamma_{12}\geq \theta_2\right)\Prob\left(\GMR\geq \theta_2\right)\Prob\left(
\GRFu\geq \theta_2\right)$ can be readily obtained considering the fact that all involving channel gains are independent exponential RVs.
To this end, by substituting~\eqref{eq:pr:P3lfinal1} and~\eqref{eq:pr:A:nnz} into~\eqref{eq:outageS3U1} we derive the desired result in~\eqref{eq:Pout:U1:Qos}.
\end{proof}
\begin{Remark}
With QoS provisioning AS scheme, if random AS is performed at the second stage, the outage probability of $\SUu$ reduces to~\eqref{eq:outNEARS2} and the outage probability of $\SUuu$ in~\eqref{eq:Pout:U2:Qos} simplifies as
\begin{align}\label{eq:Pout:U2:Qos:RD}
\PoutfuSSS= &(\Pkk)^{\MT} +
(1-(\Pkk)^{\MT})
(\Pii)^{\MB}
+
(1-(\Pkk)^{\MT})(1 - (\Pii)^{\MB}) (\Pjj)^{\MR\MB}.
\end{align}
\end{Remark}

\subsection{Achievable Rate Analysis }\label{subsec:rate}
Achievable rate is crucial to improving and optimizing while communication networks evolve generation by generation. In this subsection, we study the achievable rate  for the FD cooperative NOMA system with the proposed AS schemes.

Let $\gamma_{\SUui,\mathtt{AS}}$ indicates the $\emph{e2e}$ SINR at the user $\SUui$, $u  \in \{1 , 2\}$ for the  specific AS scheme. The sum achievable rate of the system is written as
\vspace{-0.6em}
\begin{align}\label{eq:sum rate}
{R}_{\mathrm{sum}} =\RnuAS +\RfuAS,
\end{align}
where the achievable rate of $\SUu$ and $\SUuu$ are, respectively, given by
\begin{align}\label{eq:R1R2:cdf}
\RnuAS = \mathbb{E}\left\{\log_2(1+\gamma_{1,\mathtt{AS}})\right\},\quad
\RfuAS = \mathbb{E}\left\{\log_2(1+\gamma_{2,\mathtt{AS}})\right\}.
\end{align}

In the sequel,  analytical achievable rate expressions  at $\SUu$ and $\SUuu$ for the three proposed AS schemes are presented.

\begin{proposition}\label{prop:Ru1u2:S1}
The achievable rates for max-$\SUu$ AS at $\SUu$ and $\SUuu$  can be respectively obtained as
\vspace{-0.1em}
\begin{align}\label{eq:R1:S1}
 &{R}_{\SUu}^{\Ss}= \frac{\MB}{\mathrm{ln} 2}\sum_{p=0}^{\MB-1}
\frac{ (-1)^p\binom{\MB-1}{p} }
{(p+1)\left( \frac{(p+1)\bGRNu }{\MT a_1\bGSNu}-1\right)}\left( e^{\frac{1}{\bGRNu}}\mathrm{E_i}
\left(-\frac{1}{\bGRNu}\right)-e^{\frac{(p+1)}{a_1\bGSNu}}\mathrm{E_i}
\left(-\frac{(p+1)}{a_1\bGSNu}\right)\right)
\end{align}
and
\vspace{-0.2em}
\begin{align}\label{eq:R2:S1}
&{R}_{\SUuu}^{\Ss}\!= \!\frac{\MR\MB}{\mathrm{ln} 2}
\int_{0}^{\frac{a_2}{a_1}}\!\!
\frac{e^{-\frac{x}{\bGRFu}}}{1+x}
\!\sum_{p=0}^{\MB-1}\!
\frac{(-1)^p\binom{\MB-1}{p}e^{-\frac{(p+1)\cnox}{\bGSNu}}}
{(p+1)\left(1\!+\!\frac{\bGRNu}{\bGSNu}\frac{(p+1)\cnox}{\MT}\right)}
\!\sum_{q=0}^{\MR-1}\!
\frac{(-1)^q\binom{\MR-1}{q}e^{-\frac{(q+1)\cnox}{\bGBR}}}
{(q+1)\left(1\!+\!\frac{\bGSI}{\bGBR}(q+1)\cnox\right)} dx.
\end{align}
\end{proposition}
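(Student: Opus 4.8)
The plan is to evaluate both rates through the standard cdf-based representation of the ergodic rate. For any non-negative RV $\gamma$, integration by parts gives
\begin{align}
\mathbb{E}\{\log_2(1+\gamma)\}=\frac{1}{\ln 2}\int_0^\infty\frac{1-F_\gamma(x)}{1+x}\,dx,\nonumber
\end{align}
so the task reduces to inserting the complementary cdf (survival function) of the relevant \emph{e2e} SINR and carrying out a one-dimensional integral. The key observation is that the survival functions I need have \emph{already} been produced while proving Propositions~\ref{outageU1:S1S2} and~\ref{outageU2:S1S2}; only the final integration is new.

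For ${R}_{\SUu}^{\Ss}$ the relevant quantity is the physical near-user SINR in~\eqref{eq:SINR at UE1} evaluated at the antennas chosen by~\eqref{eq:ASc near}, i.e. $\gamma_{1,\Ss}=a_1\max_{i}\GSNu/(\min_{k}\GRNu+1)$, because that ratio is maximized by independently maximizing the numerator over the $\MB$ transmit antennas and minimizing the relay interference over the $\MT$ antennas. Its survival function follows from the near-user outage expression~\eqref{eq:outNEARS1} merely by replacing the threshold $\zeta$ with $x/a_1$ (the sole effect of the factor $a_1$). Substituting into the rate integral and expanding the $\MB$-fold maximum by the binomial series leaves, for each $p$, an integral of the form $\int_0^\infty e^{-\alpha x}/[(1+x)(1+\beta x)]\,dx$ with $\alpha=(p+1)/(a_1\bGSNu)$ and $\beta=(p+1)\bGRNu/(\MT a_1\bGSNu)$. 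First I would split the integrand by partial fractions into a $1/(1+x)$ piece and a $1/(1+\beta x)$ piece, and then apply the standard identity $\int_0^\infty e^{-\mu x}/(x+c)\,dx=-e^{\mu c}\,\Ei(-\mu c)$ to each; collecting the two exponential-integral terms together with the prefactor $1/(\beta-1)$ produces exactly~\eqref{eq:R1:S1}.

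For ${R}_{\SUuu}^{\Ss}$ I would work directly with the survival function of the min in~\eqref{eq:e-2-e far}. Its three arguments depend on mutually independent channel sets---$(\GSNu,\GRNu)$ for the near-user leg, $(\GBRij,\GSIkj)$ for the relay-decoding leg, and $\GRFu$ for the relay--far link---so the survival function factorizes into a product of three survival functions, each of which is a building block in the proof of Proposition~\ref{outageU2:S1S2}. A decisive point is the upper limit: both NOMA ratios $a_2 G/(a_1 G+\text{(interference)}+1)$ are bounded above by $a_2/a_1$, so $\Prob(\gamma_{2,\Ss}>x)=0$ once $x\ge a_2/a_1$, which is precisely why the integral terminates at $a_2/a_1$. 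For $x<a_2/a_1$, the event $a_2 G/(a_1 G+\text{(interference)}+1)>x$ is equivalent to $G>\cnox\,(\text{interference}+1)$ with $\cnox=x/(a_2-a_1x)$; feeding in the $\MB$-term sum for the near-user leg, the $\MR$-term sum for the relay leg, and the single exponential $e^{-x/\bGRFu}$ for the relay--far link---single because $k^*$ is fixed by the near-user criterion in~\eqref{eq:ASc near} and thus chosen on the basis of a channel independent of $\GRFu$, leaving it unordered---yields the single-integral expression~\eqref{eq:R2:S1}.

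The only genuinely new computation is the partial-fraction/exponential-integral evaluation in the near-user case, and I expect that to be the main obstacle: care is needed with signs and with the coalescing case $\beta\to1$, where the two $\Ei$ terms merge into a single limiting value. For the far user no closed form is available, so the result is reported as a one-dimensional integral; there the substantive steps are the factorization across the three independent hops and the identification of the finite upper limit $a_2/a_1$ forced by NOMA SIC feasibility.
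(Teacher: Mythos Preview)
Your proposal is correct and follows essentially the same route as the paper: both use the cdf-based representation $\mathbb{E}\{\log_2(1+\gamma)\}=\frac{1}{\ln 2}\int_0^\infty\frac{1-F_\gamma(x)}{1+x}\,dx$, plug in the survival functions already derived in the outage analysis, factorize the far-user survival function across the three independent hops with the finite upper limit $a_2/a_1$, and for the near user reduce the remaining one-dimensional integral to exponential integrals (you via explicit partial fractions, the paper via the tabulated identity \cite[Eq.~(3.352.4)]{Integral:Series:Ryzhik:1992}, which amounts to the same thing).
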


\begin{proof}
See Appendix~\ref{proof:prop:Ru1u2:S1}.
\end{proof}

\begin{proposition}\label{prop:rate:S2}
The achievable rates of $\SUu$ and $\SUuu$ for the max-$\SUuu$ AS scheme can be expressed as
\vspace{-0.2em}
\begin{align}\label{eq:R1:S2}
{R}_{\SUu}^{\Sss}&=\frac{1}{\mathrm{ln} 2}
\frac{a_1\bGSNu}{\left(\bGRNu-a_1\bGSNu\right)}\left( e^{\frac{1}{\bGRNu}}\mathrm{E_i}\left(\!-\frac{1}{\bGRNu}\right)\!-e^{\frac{1}{a_1\bGSNu}}\mathrm{E_i}
\left(\!-\frac{1}{a_1\bGSNu}\right)\right)\!,
\end{align}
and
\vspace{-0.9em}
\begin{align}\label{eq:R2:S2}
&{R}_{\SUuu}^{\Sss}= \frac{\MT\MB}{\mathrm{ln} 2}
\int_{0}^{\frac{a_2}{a_1}}\frac{e^{-\frac{\cnox}{\bGSNu}}}
{\left(1 + \frac{\bGRNu}{\bGSNu} \cnox\right)(1+x)}\sum_{p=0}^{\MB-1}\frac{(-1)^p\binom{\MB-1}{p}e^{-\frac{(p+1)\cnox}{\bGBR}}}
{(p+1)\left(1+\frac{\bGSI}{\MR\bGBR}(p+1)\cnox\right)}\nonumber\\
&\hspace{7em}\times\sum_{q=0}^{\MT-1}
\frac{ (-1)^q\binom{\MT-1}{q} e^{-\frac{(q+1)x}{\bGRFu}}}
{(q+1)}dx,
\end{align}
respectively.
\end{proposition}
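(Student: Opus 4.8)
The plan is to base both derivations on the identity $\mathbb{E}\{\log_2(1+\gamma)\}=\frac{1}{\ln 2}\int_0^\infty\frac{\Prob(\gamma>x)}{1+x}\,dx$ and to supply the complementary cdf (ccdf) of the relevant \emph{e2e} SINR. For $R_{\SUu}^{\Sss}$ I would first invoke the Remark after Proposition~\ref{outageU1:S1S2}: since max-$\SUuu$ spends every antenna degree of freedom on $\SUuu$, the scheme is a random selection as seen by $\SUu$, so in $\gamma_{1,\Sss}=\frac{a_1\GSNu}{\GRNu+1}$ the gains $\GSNu$ and $\GRNu$ are single exponential RVs with means $\bGSNu$ and $\bGRNu$. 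Conditioning on $\GRNu$ and averaging the resulting $e^{-x(\GRNu+1)/(a_1\bGSNu)}$ over the exponential $\GRNu$ gives $\Prob(\gamma_{1,\Sss}>x)=\frac{e^{-x/(a_1\bGSNu)}}{1+\frac{\bGRNu}{a_1\bGSNu}x}$. Inserting this into the log-integral and splitting $\frac{1}{(1+x)(1+\frac{\bGRNu}{a_1\bGSNu}x)}$ by partial fractions reduces everything to integrals of the form $\int_0^\infty\frac{e^{-\alpha x}}{1+\beta x}\,dx=-\frac{1}{\beta}e^{\alpha/\beta}\mathrm{E_i}(-\alpha/\beta)$; collecting the two resulting exponential-integral terms and simplifying the prefactor $\frac{1}{1-\bGRNu/(a_1\bGSNu)}$ yields \eqref{eq:R1:S2}.

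For $R_{\SUuu}^{\Sss}$, $\gamma_{2,\Sss}$ is the jointly optimized max-min SINR of \eqref{eq:ASc far}. Since the first two constituent terms are bounded by $a_2/a_1$, so is $\gamma_{2,\Sss}$, and $\cnox$ is finite only for $x<a_2/a_1$; the log-integral therefore truncates at $a_2/a_1$. The central step is $\Prob(\gamma_{2,\Sss}>x)$, which I would compute exactly as in the proof of Proposition~\ref{outageU2:S1S2} with $\theta_2$ replaced by the running variable $x$ (i.e. $\cno\to\cnox$). Under the max-$\SUuu$ rule the transmit antenna $i$ is chosen to maximize $\GBRij$ over $\MB$ choices, the receive antenna $j$ to minimize the SI $\GSIkj$ over $\MR$ choices, and the relay-transmit antenna $k$ to maximize $\GRFu$ over $\MT$ choices; as the channel families entering the three constituent SINRs are mutually independent, the ccdf factorizes as $\Prob(\gamma_{12}>x)\,\Prob(\GMR>x)\,\Prob(\GRFu>x)$.

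I would then evaluate the three factors. The $\gamma_{12}$ factor involves only the ``leftover'' gains $\GSNu,\GRNu$ at the selected indices, which are independent of the selection criteria and hence single-link; the same conditioning as in the $\SUu$ part gives $\Prob(\gamma_{12}>x)=\frac{e^{-\cnox/\bGSNu}}{1+\frac{\bGRNu}{\bGSNu}\cnox}$. The $\MT$-fold maximum yields $\Prob(\GRFu>x)=1-(1-e^{-x/\bGRFu})^{\MT}=\MT\sum_{q=0}^{\MT-1}\frac{(-1)^q\binom{\MT-1}{q}}{q+1}e^{-(q+1)x/\bGRFu}$. For $\Prob(\GMR>x)$ I would expand $1-(1-e^{-\cnox(u+1)/\bGBR})^{\MB}$ from the $\MB$-fold maximum of $\GBRij$ and then average over the minimized SI $u$ (mean $\bGSI/\MR$), producing $\MB\sum_{p=0}^{\MB-1}\frac{(-1)^p\binom{\MB-1}{p}}{p+1}\frac{e^{-(p+1)\cnox/\bGBR}}{1+\frac{\bGSI}{\MR\bGBR}(p+1)\cnox}$. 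Multiplying the three factors, dividing by $1+x$ and integrating over $(0,a_2/a_1)$ gives \eqref{eq:R2:S2}; the remaining single $x$-integral has no elementary antiderivative and is left in integral form.

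The main obstacle is the $\SUuu$ ccdf: arguing that the coupled joint max-min over $(i,j,k)$ in \eqref{eq:ASc far} decouples into three independent per-component optimizations — maximizing $\GBRij$, minimizing $\GSIkj$, and maximizing $\GRFu$ — so that the ccdf factorizes, and then keeping the order-statistics bookkeeping consistent, in particular the $1/\MR$ scaling of the effective SI mean that arises from minimizing over the $\MR$ receive antennas. Once this ccdf is in hand the remaining manipulations mirror Proposition~\ref{outageU2:S1S2}, while the $\SUu$ part is routine after the random-AS reduction.
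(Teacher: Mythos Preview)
Your proposal is correct and follows essentially the same route the paper takes: the paper's own proof simply refers back to Proposition~\ref{prop:Ru1u2:S1}, i.e., use the identity $R=\frac{1}{\ln 2}\int_0^\infty\frac{1-F_\gamma(x)}{1+x}\,dx$, insert the ccdf \eqref{eq:cdf:gam1:S2} for $\SUu$ (single-exponential reduction plus partial fractions and the $\mathrm{E_i}$ identity), and for $\SUuu$ factorize $\Prob(\gamma_{2,\Sss}>x)$ into the three pieces computed in Appendix~\ref{Appendix:outageU2:S1S2} (\eqref{eq:cdf:GRFuss}, \eqref{eq:cdf:GMRSs}, and the single-link $\gamma_{12}$ term) with $\theta_2\to x$. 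Your identification of the decoupling of the joint max-min in \eqref{eq:ASc far} into per-component selections, and the resulting $1/\MR$ scaling of the SI mean, is exactly the approximation the paper adopts.
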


\begin{proof}
This proposition can be proofed by following similar steps as in Proposition~\ref{prop:Ru1u2:S1} and thus the proof is omitted.
\end{proof}

\begin{proposition}\label{prop:rate:S3}
The achievable rate of $\SUu$ under QoS provisioning AS scheme, is given by
\begin{align}\label{eq:rate:S3:u1}
{R}_{\SUu}^{\Ssss}&=
\sum_{r=0}^{\MB\MT}
\sum_{i=1}^{\MB}\sum_{\substack{k=1 \\ k \times i = r\\ \text{or}~i=k=r}}^{\MT}
\sum_{p=0}^{i-1}
\binom{\MB\MT}{r}(\Pll)^r (1\!-\Pll)^{\MB\MT-r}
~i (-1)^p\binom{i-1}{p}\nonumber\\
 &\hspace{2em}
\times\frac{1}{ a_1\bGSNu\mathrm{ln} 2}
\int_0^{\infty}
\Bigg(\frac{1}{1 \!+\! \frac{(p+1)\bGRNu}{k a_1\bGSNu}x}
+\frac{\bGRNu}{k\left(1 \!+\! \frac{(p+1)\bGRNu}
{k a_1\bGSNu}x\right)^2}\Bigg)e^{-\frac{(p+1)x}{a_1\bGSNu}}\ln(1+x)dx.
\end{align}
\end{proposition}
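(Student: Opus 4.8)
The plan is to evaluate the definition $\RnuAS=\mathbb{E}\{\log_2(1+\gamma_{1,\Ssss})\}$ from \eqref{eq:R1R2:cdf} by conditioning on the random cardinality $|\mathcal{A}|$ of the admissible antenna set, exactly as in the outage analysis of Proposition~\ref{prop:outage:S3}. First I would write
$$\RnuAS=\frac{1}{\ln 2}\sum_{r=0}^{\MB\MT}\Prob\!\left(|\mathcal{A}|=r\right)\mathbb{E}\left\{\ln(1+\gamma_1)\,\middle|\,|\mathcal{A}|=r\right\},$$
and import $\Prob(|\mathcal{A}|=r)=\binom{\MB\MT}{r}(\Pll)^r(1-\Pll)^{\MB\MT-r}$ directly from \eqref{eq:pr:A:nnz}, so that no new combinatorial bookkeeping is needed.

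Next, for each $r$ I would reuse the decomposition already established for $\mathcal{P}_3(r)$ in the proof of Proposition~\ref{prop:outage:S3}: only the $i$ distinct BS transmit antennas and the $k$ distinct relay transmit antennas appearing in $\mathcal{A}$ affect the near-user SINR, so the inner sum runs over all factorizations with $k\times i=r$ (or $i=k=r$). By the selection rule \eqref{eq:TwoStage:near}, the retained combination maximizes $a_1\GSNu/(\GRNu+1)$ over the admissible links, which makes the conditional SINR equal to $\gamma_1=a_1 A_i/(B_k+1)$, where $A_i$ is the maximum of $i$ i.i.d.\ exponential gains $\GSNu$ (mean $\bGSNu$) and $B_k$ is the minimum of $k$ i.i.d.\ exponential gains $\GRNu$ (mean $\bGRNu$) — the same random variables $A_i,B_k$ introduced in that proof.

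The core computation is the conditional pdf of $\gamma_1$. I would start from the pdf of the maximum, $f_{A_i}(a)=\frac{i}{\bGSNu}\sum_{p=0}^{i-1}\binom{i-1}{p}(-1)^p e^{-(p+1)a/\bGSNu}$ (binomial expansion of $(1-e^{-a/\bGSNu})^{i-1}$), together with $f_{B_k}(b)=\frac{k}{\bGRNu}e^{-kb/\bGRNu}$, and obtain $f_{\gamma_1}$ via the change of variable $A_i=(B_k+1)x/a_1$,
$$f_{\gamma_1}(x)=\int_0^\infty \frac{b+1}{a_1}\,f_{A_i}\!\left(\frac{(b+1)x}{a_1}\right)f_{B_k}(b)\,db.$$
The integral over $b$ is elementary: the Jacobian factor $(b+1)$ splits it into a first- and a second-order exponential moment, which after collecting the combined exponential rate into $\delta\triangleq 1+\frac{(p+1)\bGRNu}{k\,a_1\bGSNu}x$ produce precisely the two-term rational kernel $\delta^{-1}+\frac{\bGRNu}{k}\delta^{-2}$ and the factor $e^{-(p+1)x/(a_1\bGSNu)}$ appearing in \eqref{eq:rate:S3:u1}. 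Substituting $f_{\gamma_1}$ into $\frac{1}{\ln 2}\int_0^\infty \ln(1+x)f_{\gamma_1}(x)\,dx$ and reinstating the weights $\binom{\MB\MT}{r}(\Pll)^r(1-\Pll)^{\MB\MT-r}$ and the factorization sum assembles the stated expression.

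The main obstacle, as in the outage proof, is justifying the conditional law of $\gamma_1$: membership $\{i,j,k\}\in\mathcal{A}$ is defined through the far-user constraint, whose term $a_2\GSNu/(a_1\GSNu+\GRNu+1)$ shares the gains $\GSNu,\GRNu$ with the near-user SINR, so admissibility and $\gamma_1$ are not strictly independent. I would resolve this exactly as Proposition~\ref{prop:outage:S3} does — treating the candidate BS and relay gains entering the max/min as fresh i.i.d.\ exponentials and factoring $\Pll$ as the product of the three independent threshold-crossing probabilities — so that the conditional pdf derivation above applies verbatim. The remaining pieces (the exponential integral in $b$ and the final $\ln(1+x)$ integral, which is deliberately left in integral form) are then routine.
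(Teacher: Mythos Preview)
Your proposal is correct and follows essentially the same route as the paper: condition on $|\mathcal{A}|=r$ with the binomial weight $\binom{\MB\MT}{r}(\Pll)^r(1-\Pll)^{\MB\MT-r}$, decompose over the same factorizations $k\times i=r$ (or $i=k=r$), model the conditional near-user SINR as $a_1 A_i/(B_k+1)$ with $A_i$ the maximum of $i$ and $B_k$ the minimum of $k$ i.i.d.\ exponentials, and integrate $\ln(1+x)$ against the resulting pdf. The only cosmetic difference is that the paper first writes the conditional cdf \eqref{eq:pr:P3lfinal} and differentiates it, whereas you compute the pdf directly via the Jacobian transformation $A_i=(B_k+1)x/a_1$; both yield the identical two-term kernel $\delta^{-1}+\tfrac{\bGRNu}{k}\delta^{-2}$, so the approaches coincide.
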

\begin{proof}
By invoking~\eqref{eq:R1R2:cdf},  to derive $\mathcal{R}_{\SUu}^{\Ssss}$, we need to obtain $f_{\gamma_{1,\Ssss}}(\cdot)$.  By using similar steps in deriving~\eqref{eq:Pout:U1:Qos}, we have
\vspace{-0.7em}
\begin{align}\label{eq:pr:P3lfinal}
F_{\gamma_{1,\Ssss}}(x)&=
\sum_{r=0}^{K_1}\binom{\MB\MT}{r}(\Pll)^r (1\!-\Pll)^{\MB\MT-r}
\sum_{i=1}^{\MB}\sum_{\substack{k=1 \\ k \times i = r\\ \text{or}~i=k=r}}^{\MT}
\left( 1- i\sum_{p=0}^{i-1}\!\!
\frac{ (-1)^p\binom{i-1}{p} e^{-\frac{(p+1)x}{a_1\bGSNu}}}
{(p+1)\left(1 \!+\! \frac{(p+1)\bGRNu}{k a_1\bGSNu}x\right)}\!\right).
\end{align}

Accordingly, by taking the first order derivative of $F_{\gamma_{1,\Ssss}}(x)$, the pdf of $\gamma_{1,\Ssss}$ is obtained. Then, by using~\eqref{eq:R1R2:cdf}, the desired result in~\eqref{eq:rate:S3:u1} can be achieved.
\end{proof}

\begin{Remark}
Derivation of the achievable rate of the $\SUuu$ with the QoS provisioning AS scheme is complicated. Moreover, the expressions in~\eqref{eq:R2:S1},~\eqref{eq:R2:S2}, and~\eqref{eq:pr:P3lfinal} are not simple
enough to provide immediate insight, but they are general and fast to be calculated using popular scientific software packages. Hence, we have
resorted to simulations for evaluating the achievable rate of  $\SUuu$ in Section~\ref{sec:Num}.
\end{Remark}

\vspace{-0.0em}
\section{QoS Provisioning AS with Dynamic Antenna Clustering}\label{sec:DQoS}
In the proposed AS schemes, the FD relay's structure is static, i.e., the number of relay's receive and transmit antennas is fixed. Nevertheless, when one can adaptively configure the antennas at relay for transmission and/or reception, the proposed QoS provisioning AS scheme can dynamically choose receive and transmit antennas to ensure the $\SUuu$'s target rate, $\mathcal{R}_2$, while maximizing the achievable rate of $\SUu$. Therefore, the available radio resources will be more efficiently exploited to manage the data rate of different users according to their requirement~\cite{chen2015flexradio}. Antenna clustering has been recently deployed in MIMO communications for simultaneous wireless information and power transfer in reciprocal directions in a point-to-point MIMO-FD system~\cite{Hossain:TCOM:2021}.

In the proposed QoS provisioning AS scheme, it is possible that for a given fixed number of receive and  transmit antennas at $\Rly$, the desired target rate for $\SUuu$ is not realized, i.e., $\mathcal{A}$ is empty. However, for the cases in which $\Rly$ has access to an adaptively configured shared antenna with the ability of operating in either transmission or reception mode, we might configure the number of receive and transmit antennas to meet the rate requirement of the $\SUuu$. On the other hand, it is also possible to realize the $\SUuu$'s target rate with fewer receive and/or transmit antennas. In this case, if $\Rly$  has an adaptively configured shared antenna, the remaining degrees-of-freedom can be utilized further to enhance the achievable rate of $\SUu$. Therefore, there is an optimal antenna configuration at $\Rly$ that improves the performance of the QoS provisioning AS scheme.
\vspace{0em}
\begin{algorithm} [t]
\SetAlgoLined
 \LinesNumbered
 \KwIn{$\hRU$, $\hRUU$, $\hBR$, $\HSI$, M, $\mathcal{R}_2$}
 \KwResult{$\MT^*$, $\MR^*$, $i^*$, $j^*$, $k^*$ }
 Initialization: $\mathcal{A}=\emptyset$, $\MT^\delta=1$, $\MR^\delta=M-1$\\
 \For{$ \ell=1: M $}{
   $\Phi_R=\{m_\ell\}$, $\Phi_T=\{m_1,\ldots,m_{\ell-1},m_{\ell+1},\ldots,m_M\}$ \\
 \While{ $M_T^\delta\geq1$ $\&$ $M_R^\delta\geq1$ }{

 \For{$ i=1: \MB $}{
 \For{$ i=1: M_T^\delta $}{
 \For{$ i=1: M_R^\delta $}{
$\gamma_{2} = \min(\gamma_{12}, \gamma_{\mathtt{R}},\gamma_{\mathtt{RU2}})$\\
 \If{ $\gamma_{2}\geq 2^{\mathcal{R}_2-1}$} {
            $\mathcal{A} = \mathcal{A}\cup\{i,j,k\}$}

  }
  }
  }
  \eIf{$\mathcal{A} \neq \emptyset$}{
   $i^* = \argmax_{1\leq i \leq \MB} \|\HBNu\|^2$\\
   $j^* = \argmax_{ j \in \Phi_R} \|\HBRisj\|^2$\\
   $k^* =\argmin_{k\in\Phi_T}\|\HRNu\|^2$\\
   $\MT^*=|\Phi_R|$, $\MR^*=|\Phi_T|$\\
   Break\;
   }{
   $\Phi_R=\Phi_R \cup m_{\ell+1}$, $\Phi_T=\Phi_T \backslash m_{\ell+1}$\\
   $\MR^\delta=|\Phi_R|$, $\MT^\delta=|\Phi_T|$\;
  }
 }
 \caption{The proposed QoS provisioning AS with dynamic antenna clustering}
 }
\end{algorithm}
Let us denote by $\qC=\{C_{\pi_1},\cdots,C_{\pi_n}\}$ the set of all antenna configurations at the FD relay such that each of the configurations contains at least one transmit-receive antenna pair to simultaneously perform transmission/reception. As an example, let us consider that $\Rly$ is equipped with $M=\MT+\MR=6$ antennas. One possible antenna configuration at $\Rly$ is
\begin{align}
C_{\pi_{\delta}}=\bigg\{ \underbrace{\{m_1,m_5\},}_{\text{transmit antenna indices}}
\underbrace{\{m_2, m_3, m_4, m_6\}}_{\text{receive antenna indices}}\bigg\},
\end{align}
where $1\leq \delta \leq n$ with $n=\sum_{r}^{M-1}\binom{M}{r}$. The optimal set of antenna configuration and then the corresponding selected antenna from receive and transmit antenna subsets must be the ones that produce the best system performance. Specifically, every set of antenna configurations $C_{\pi_\delta},~\forall \delta =1,\ldots,n$, will have the corresponding transmit and receive selected antenna indices.
Accordingly, for the system under consideration, joint antenna clustering and AS problem can be formulated as
 \begin{subequations}\label{eqn:Qos: antconf}
 \begin{align}
   \max_{\qC=\{C_{\pi_1},\ldots,C_{\pi_n}\}} \hspace{1em}&\frac{ \GSNu}{\GRNu+1}\\
    \hspace{-4em}\mbox{s.t.} \hspace{3em}& \mathcal{A}\!=\!
    \Bigg\{\{i,j,k\}\!:\hspace{0em}\min\left(\frac{ a_2\GSNu}{ a_1\GSNu \!+\!\GRNu\!+1},\frac{a_2\GBRij}{ a_1\GBRij \!+\! \GSIkj\!+1},
\GRFu\right)\!>\!2^{R_2}\!-\!1\Bigg\}\!\neq\!\emptyset,\nonumber\\
&\hspace{5em} 1\leq i\leq\MB,~1\leq j\leq\MR^\delta,~1\leq k\leq\MT^\delta\\
    \hspace{1em}& M=\MR^\delta + \MT^\delta,
 \end{align}
  \end{subequations}
where $\MR^\delta$ and $\MT^\delta$ denote the number of receive and transmit antenna indices in $C_{\pi_\delta}$, respectively.

Problem~\eqref{eqn:Qos: antconf} is a combinatorial optimization problem. The complexity of this problem will increase exponentially with the number of antennas (and hence the possible antenna configurations). There is an optimal transmit and receive antenna pair for every possible antenna clustering. The globally optimal solution is the
pair that gives the best performance (i.e., the highest rate at $\SUu$ with intended QoS at $\SUuu$). This performance can only be achieved by finding the best transmit and receive antenna indices for every possible antenna configuration, and therefore, an exhaustive search will be required. In Algorithm 1, we propose a procedure that achieves a balance between complexity and performance. In Algorithm 1, $\Phi_T$ and $\Phi_R$ denote sets containing transmit and receive antennas indices. After antenna clustering, the best transmit and receive antenna is selected such that the received SINR at $\SUu$ is maximized.

Having obtained the optimal antenna cluster at the relay, denoted by $C_{\pi_{\delta^*}}$, the outage probability and the Achievable sum rate of the system under QoS provisioning AS with dynamic antenna clustering can be derived using Propositions~\ref{prop:outage:S3} and~\ref{prop:rate:S3}, respectively, in which $\MR$ and $\MT$ are replaced by $\MR^*$ and $\MT^*$, respectively.


\vspace{-1em}
\section{Implementation and Signal Requirements}~\label{sec:imp}
This section discusses the implementation requirements of the proposed AS schemes. CSI acquisition mechanism at the BS and $\Rly$ to design the AS schemes is described. Then, we elaborate on the computational complexity aspect of the schemes.
\vspace{-1em}
\subsection{Channel State Information Acquisition }
All proposed AS schemes require the availability of CSI at the BS and relay. In general, this CSI acquisition can be achieved either by channel reciprocity in time-division duplexed (TDD) systems or by feedback in frequency-division duplexed (FDD) systems. Since most of the current wireless systems are FDD and the performance of the TDD systems is severely degraded by the phenomena of pilot contamination~\cite{Marzetta:TDD:2011}, we focus on the FDD systems. For the practical implementation of the max-$\SUu$ AS scheme, the BS and $\Rly$   send the pilot signal to $\SUu$. Upon receiving the pilot signal, $\SUu$  feedbacks the  index of the antenna that the BS and $\Rly$ must use in the subsequent information transmission phase. Also, according to the feedback antenna indices, the $\Rly$  decides on the best antenna index to be used on its receive side. In the case of max-$\SUuu$ AS schemes, the BS transmits a pilot signal to $\SUu$ and $\Rly$, and $\Rly$ transmits the pilot signal to $\SUu$ and $\SUuu$. Upon reception of the pilot, $\SUuu$ can next feedback the antenna index that the $\Rly$ must use at the next transmission phase, while the $\SUu$ feedbacks the antenna index used at the BS. Then, the relay can decide on the best antenna index for its receive side. For the QoS provisioning AS scheme, the same process as in max-$\SUuu$ AS schemes can be performed to first create the antenna set $\mathcal{A}$. Then, based on the available feedback from $\SUu$ and $\SUuu$ at the BS and $\Rly$, the best antenna indices can be selected.

In practice, we notice that the perfect CSI is challenging to obtain due to the impact of channel estimation error and limited feedback. Moreover, the CSI used for  TAS is usually outdated due to feedback delay. Moreover, the imperfect/outdated CSI can deteriorate the SIC process of the NOMA system~\cite{Ding:Survay}. Our results provide insights on the efficient design of TAS in cooperative NOMA systems and open several exciting avenues for future research, including analyzing the impact of imperfect/outdated CSI on TAS problems in cooperative FD NOMA systems.

\vspace{-1em}
\subsection{Complexity Analysis}
For the proposed AS schemes, the BS and $\Rly$ require each coherent interval to perform channel estimation and AS to enable efficient data transmission between the BS and NOMA users. Higher complexity results in longer processing time, reducing  the data transmission interval. Therefore, the complexity analysis in this subsection is essential to quantify the efficiency of the proposed AS schemes for practical scenarios. We follow the complexity analysis in~\cite{Gershman:JSP}.

For the max-$\SUu$ AS, the relay transmit antenna index, $k^*$, is determined by first computing $|\HRNu|$ for all $k=1,\ldots,\MT$ and then sort the $\MT$ norms, which requires $\mathcal{O}(\MT+\MT\log\MT)$ complexity. Similarly, the required computational complexity to find the BS transmit antenna index $i^*$ and relay receive antenna index $j^*$  are $\mathcal{O}\left(\MB+\MB\log\MB\right)$ and $\mathcal{O}\left(\MR+\MR\log\MR\right)$, respectively.  Recalling that max-$\SUuu$ AS first determines $k^*$ such that $\GRFu$ is maximized. It then selects the weakest SI channel $\GSIkj$  for the given $k^*$ and finally finds $i^*$ such that for the given $j^*$, $\GBRij$ is maximized. Hence, the total computational complexity of max-$\SUuu$ is the same as the max-$\SUu$ AS scheme.

With the QoS provisioning AS scheme, after construction set $\mathcal{A}$, the optimum subset of $\mathcal{A}$, i.e.,  $\{i^*, j^*, k^*\}$, is determined such that \emph{e2e} SINR at $\SUu$ is maximized. Since $\MB\MR\MT$ possible subsets of $\mathcal{A}$ must be checked to find out the potential subsets ensuring target rate $\mathcal{R}_2$ for $\SUuu$, the rough complexity of the QoS AS scheme is determined as  $\mathcal{O}\big( \MB\MR\MT((\MB+ \MR+ \MT)+(\MB\log\MB+\MR\log\MR+\MT\log\MT))\big)$. When dynamic antenna clustering is considered at $\Rly$, the number of candidate subsets in $\mathcal{A}$ is $\sum_{r=1}^{M-1}\binom{M}{r}$. Hence, the complexity of the QoS AS scheme is $\mathcal{O}\big( \sum_{r=1}^{M-1}\binom{M}{r}((\MB+ \MR+ \MT)+(\MB\log\MB+\MR\log\MR+\MT\log\MT))\big)$,  which is fairly high compared to the fixed antenna setup at $\Rly$, especially for the large value of $M=\MR+\MT$.

Moreover, the computational complexity of all schemes depends on the antenna numbers at both BS and $\Rly$. Therefore, the QoS provisioning AS scheme with dynamic antenna clustering requires the highest computational complexity. On the other hand, the computational complexity of the max-$\SUu$ and max-$\SUuu$ AS schemes are the same and relatively low, albeit at the cost of losing performance (c.f. Fig.~\ref{fig:Rd}).

\vspace{-1em}
\section{Numerical Results and Discussion} \label{sec:Num}
This section presents numerical results to quantify the performance gains when the considered FD cooperative NOMA system adopts the max-$\SUu$, max-$\SUuu$, and QoS provisioning AS schemes. Achievable rates of the QoS provisioning AS scheme with configurable shared antenna at $\Rly$ are presented to demonstrate the potential impact of dynamic antenna clustering on the AS.
Unless otherwise stated, the value of network parameters are: $a_1=0.1$, $a_2=0.9$, $\sigma^2_{\mathtt{RU1}}=\sigma^2_{\mathtt{RU2}} =0.5$, $\sigma^2_{\mathtt{SU1}}=\sigma^2_{\mathtt{SR}}=1$, and $\Sap = 0.3$. In order to verify the advantage of the proposed AS schemes, three benchmark schemes are considered with fixed antenna setup at $\Rly$ \emph{i) {Optimum AS scheme}}: It performs an exhaustive search of all possible combinations to determine the antenna subset in order to maximize the ergodic sum rate, \emph{ii) {Optimum-$\SUuu$ AS scheme} }: This scheme aims to maximize the \emph{e2e} SINR at $\SUuu$ in an optimal sense and performs an exhaustive search of all possible combinations to determine the optimum antenna subset, and \emph{iii) {Random AS scheme}}: It performs random AS at the BS and relay input/output.

\begin{figure}[tbp]
\begin{subfigure}[a]{0.5\textwidth}
\includegraphics[width=90mm, height=67mm]{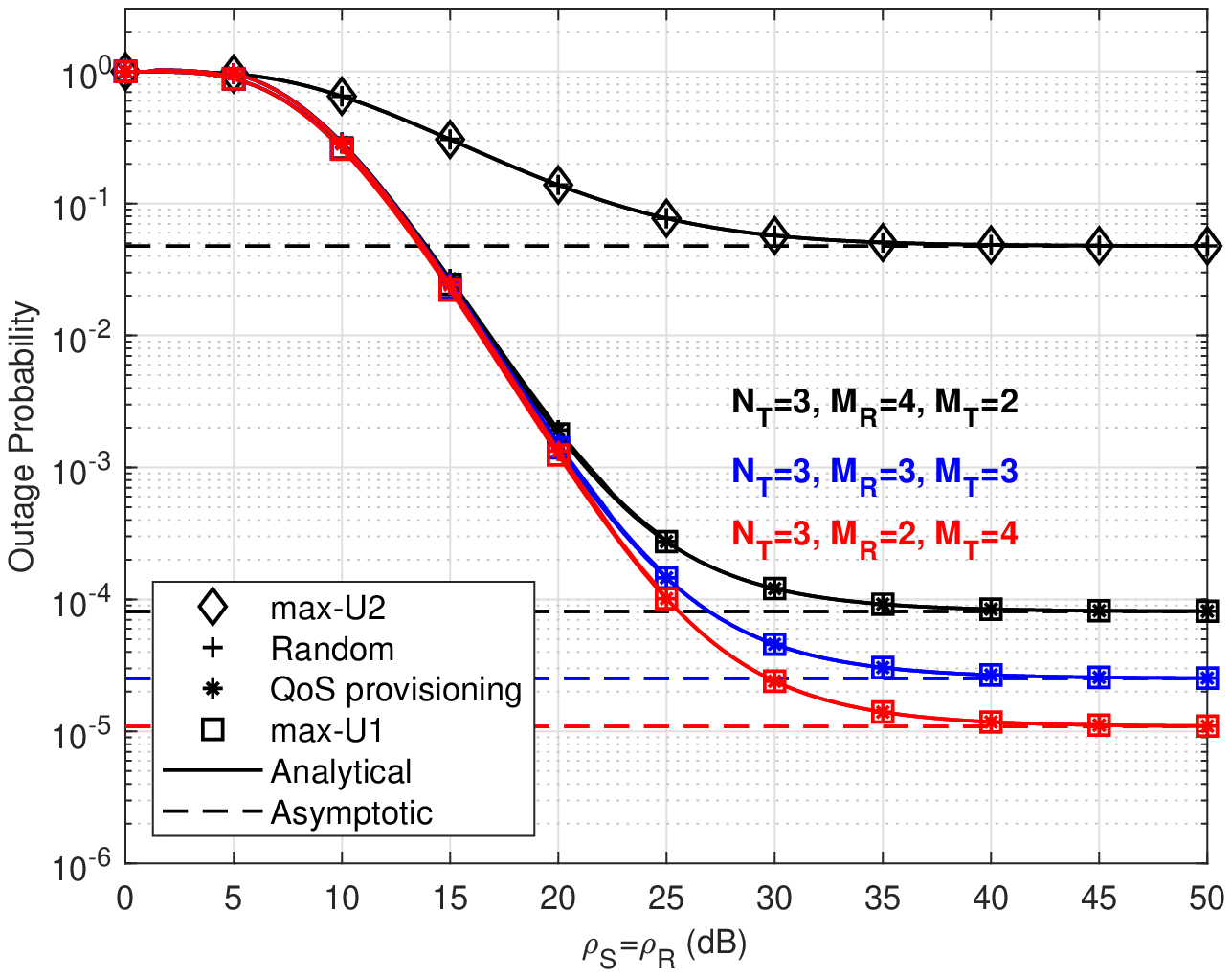}
\caption{ $\SUu$.}
\label{fig:Poutu1}
\end{subfigure}
\begin{subfigure}[a]{0.5\textwidth}
\includegraphics[width=90mm, height=67mm]{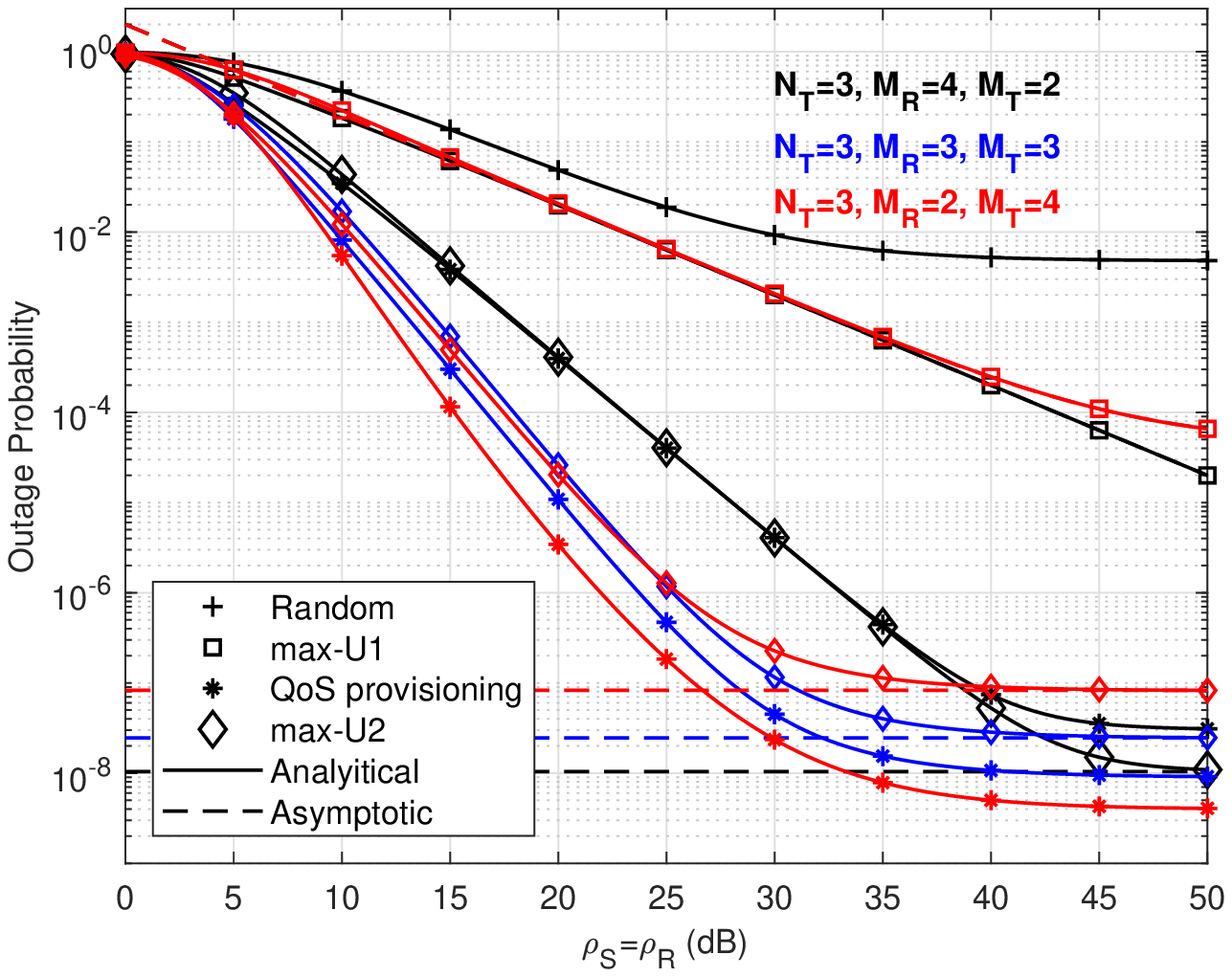}
\caption{$\SUuu$.}
\label{fig:Poutu2}
\end{subfigure}
\caption{Outage probability of $\SUu$ and $\SUuu$ for different antenna configurations ($k_1=0.1$, $\mathcal{R}_1=\mathcal{R}_2=1$ (bit/sec/Hz)).}\label{fig:Pout}
\vspace{-2em}
\end{figure}
\vspace{-0.5em}
\subsection{Outage Probability}
Fig.~\ref{fig:Pout} shows the outage probability of the near user $\SUu$ and far user $\SUuu$ versus $\rho_S=\rho_P$  for the proposed AS schemes with different antenna configurations at relay, where the analytical results are based on Propositions~\ref{outageU1:S1S2},~\ref{outageU2:S1S2}, and~\ref{prop:outage:S3} and the asymptotic results are
based on Corollary~\ref{Corr:S1S2}. We observe that the analytical results (solid lines) tightly match
simulation results (marker lines) and that asymptotic curves (dashed lines) tightly converge
to the exact ones at the high-SNR regime. Comparing the QoS provisioning and  max-$\SUu$ AS schemes for $\SUu$ in Fig.~\ref{fig:Poutu1}, we see that their outage performances are  the same  in almost all transmit power regimes. Nevertheless, for minimal values of $P$, there is a negligible performance difference between the QoS provisioning  and  max-$\SUu$ schemes. The intuitive reason is that the former must simultaneously ensure $\SUuu$'s targeted data rate and serve $\SUu$ with a rate as large as possible. Therefore, its outage performance depends on the  set $\mathcal{A}$, given  in~\eqref{eq:TwoStage:far}, and is more efficient for  scenarios with larger $P$ which provide  larger $|\mathcal{A}|$, i.e., more AS subsets choices. For $\MR +\MT=6$, the additional transmit antenna, e.g., $\MT=4$ and $\MR=2$, could increase the degree-of-freedom (in terms of AS) to maximize the \emph{e2e} SINR at the $\SUu$ and enhance the outage performance. Additionally, we observe that the  max-$\SUu$ and random schemes exhibit the same outage probability of the $\SUu$ for all antenna configurations. This is because max-$\SUu$ uses all degrees of spatial freedom  to maximize the \emph{e2e} at the $\SUuu$. Hence, there is no spare capacity to maximize the \emph{e2e} SINR at $\SUu$.

\begin{figure}[tbp]
\begin{subfigure}[a]{0.5\textwidth}
\includegraphics[width=90mm, height=67mm]{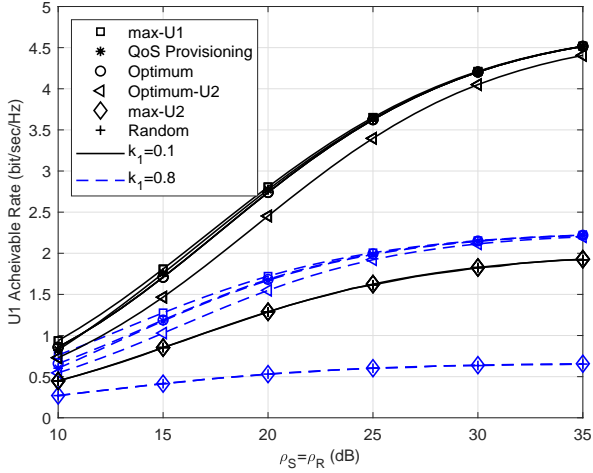}
\caption{$\SUu$.}
\label{fig:Ru1}
\end{subfigure}
\begin{subfigure}[a]{0.5\textwidth}
\includegraphics[width=90mm, height=67mm]{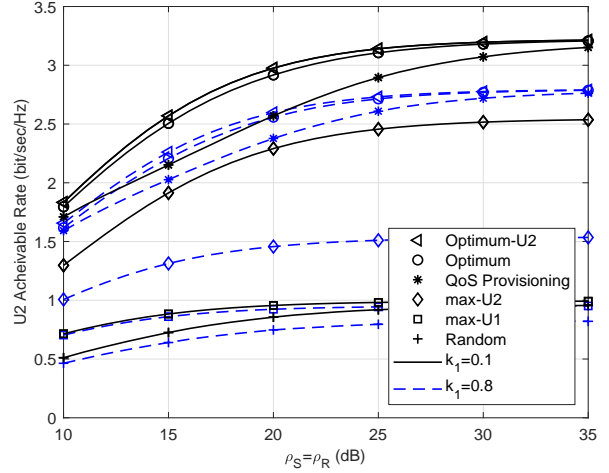}
\caption{$\SUuu$.}
\label{fig:Ru2}
\end{subfigure}
\caption{The achievable rate  for different inter-user interference strengths ($\MB=\MR=\MT=4$, $\mathcal{R}_2=1.5$ (bit/sec/Hz)).}\label{fig:Ru12}
\vspace{-2.1em}
\end{figure}

In Fig.~\ref{fig:Poutu2} we compare the outage probability of the proposed AS schemes with different antenna configurations for $\SUuu$. Fig.~\ref{fig:Poutu2} shows the superiority of the QoS provisioning AS scheme over other AS schemas, which improves with the increasing transmission power. Nevertheless, when SNR is high,  $\MR=4$, and  $\MT=2$, max-$\SUuu$ outperforms other schemes. This out-performance happens because decreasing $\MT$ decreases the $|\mathcal{A}|$ which reduces the degree-of-freedom available to maximize the second-hop SINR at $\SUuu$. Moreover, the QoS provisioning AS scheme is more favorable than other AS schemes for the scenarios with a smaller number of receive antennas at $\Rly$. For the max-$\SUuu$, we see that in the low-to-medium SNR regime, an additional transmit antenna at $\Rly$ could increase the SINR of the second hop and enhance the outage performance of $\SUuu$.   However, at high SNR, max-$\SUuu$ is more efficient with more receive antennas at $\Rly$.

\begin{figure}[tbp]
\begin{subfigure}[a]{0.5\textwidth}
\includegraphics[width=90mm, height=67mm]{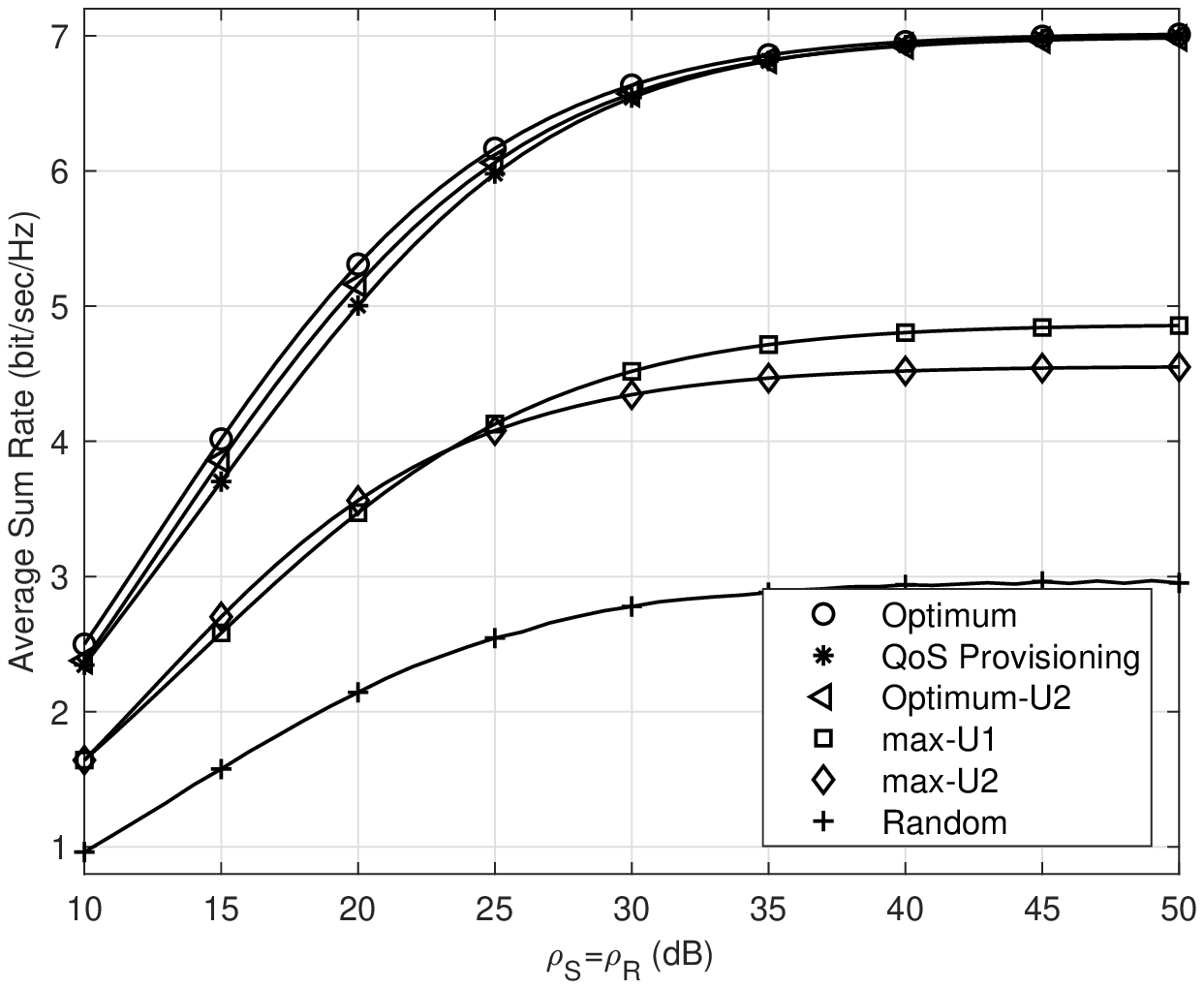}
\caption{$\MB=4$, $\MR=6$, $\MT=2$.}
\label{fig:Rsum1}
\end{subfigure}
\begin{subfigure}[a]{0.5\textwidth}
\includegraphics[width=90mm, height=67mm]{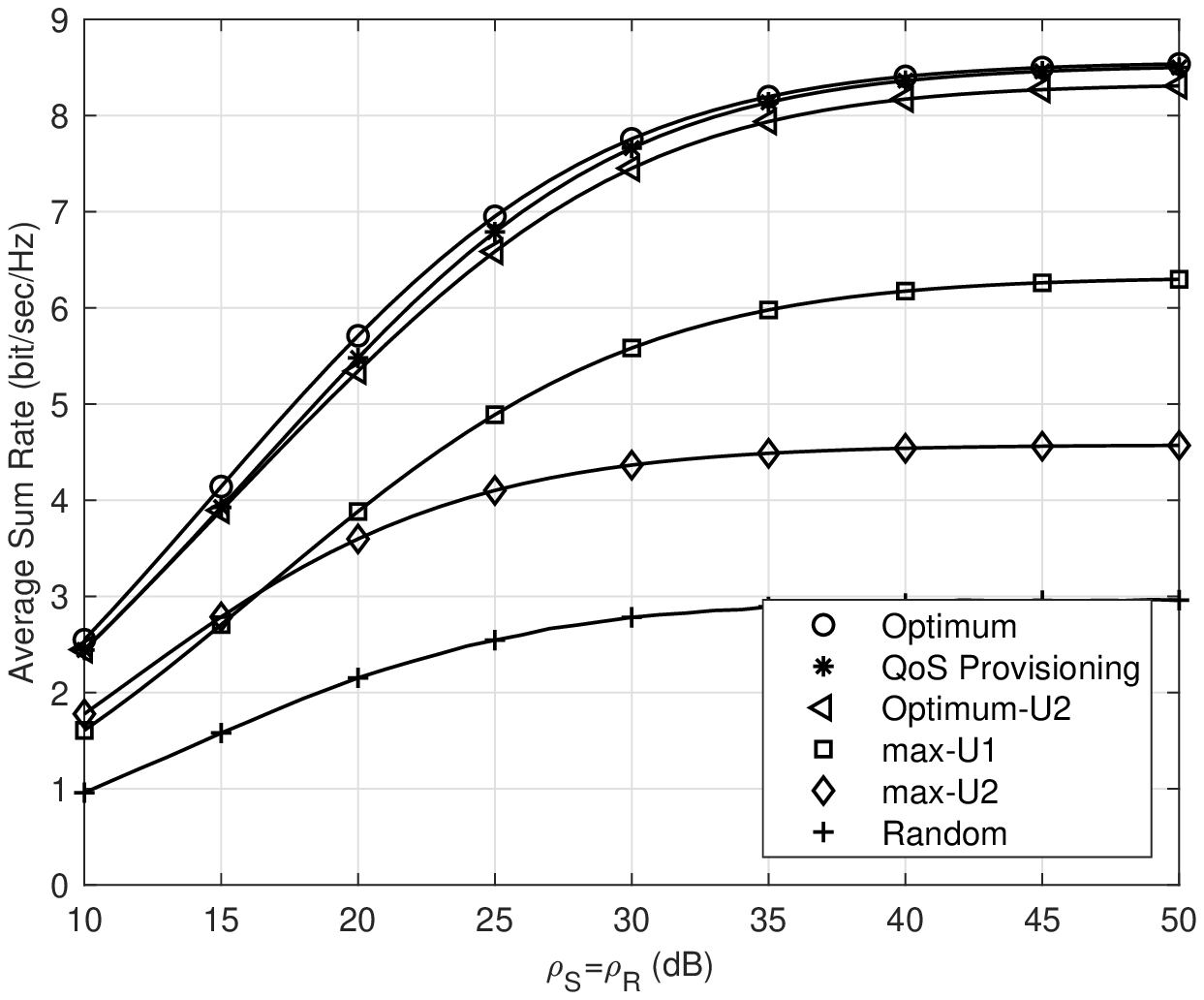}
\caption{$\MB=4$, $\MR=2$, $\MT=6$.}
\label{fig:Rsum2}
\end{subfigure}
\caption{The achievable rate  for different antenna configurations at $\Rly$ ($k_1=0.1$, $\mathcal{R}_2=1.5$ (bit/sec/Hz)).}
\label{fig:Rsum}
\vspace{-2.1em}
\end{figure}
\vspace{-0.7em}
\subsection{Achievable Rate}
Fig.~\ref{fig:Ru12} compares the achievable rate at $\SUu$ and $\SUuu$  due to the optimum, optimum-$\SUuu$,  QoS provisioning, max-$\SUu$, max$\SUuu$, and random AS schemes. Fig.~\ref{fig:Ru1} depicts the superiority of the optimum,  QoS provisioning, and max-$\SUu$ schemes over optimum-$\SUuu$, max-$\SUuu$, and random AS schemes for $\SUu$. In addition, we observe that as $k_1$ decreases, all AS schemes achieve better achievable rates since lower $k_1$ causes weaker interference between the relay and $\SUu$. Notably, the max-$\SUu$ and QoS provisioning schemes exhibit the best achievable rate performance. The superior performance of these schemes is more pronounced in the low-to-medium SNR regime. Fig.~\ref{fig:Ru2} illustrates that optimum-$\SUuu$, optimum, and QoS provisioning AS schemes outperform other AS schemes in  all transmit power regimes. Nevertheless,  the QoS provisioning scheme achieves a superior achievable rate performance at $\SUu$ compared to the optimum-$\SUuu$ and optimum schemes. We observe that the $\SUuu$'s targeted data rate $\mathcal{R}_2=1.5$ bit/sec/Hz is not realized by max-$\SUu$, max-$\SUuu$, and random AS schemes.

Fig.~\ref{fig:Rsum} compares the sum achievable rate of the proposed AS schemes versus $\rho_S=\rho_P$ for $M=\MR+\MT$ antennas at the relay. Two cases are considered where  ($\MR=6$, $\MT=2$) in Fig.~\ref{fig:Rsum1} and ($\MR=2$, $\MT=6$)  in Fig.~\ref{fig:Rsum2}. We see that the QoS provisioning AS scheme exhibits a near-optimum performance. More specifically, for the combination ($\MR=2$, $\MT=6$), the QoS provisioning scheme outperforms the optimum-$\SUuu$ scheme. However, the outage performance of max-$\SUuu$ and random schemes is not sensitive to the antenna configuration at the relay. This is because max-$\SUuu$ uses all spatial  degrees of freedom to only maximize the \emph{e2e} at the $\SUuu$. In contrast, the considered criterion  for AS selection in QoS provisioning essentially distributes the spatial degrees of freedom among $\SUu$ and $\SUuu$, providing user fairness. An interesting observation is that the sum achievable rate of the system significantly depends on the relay's antenna configuration. In particular, with ($\MR=6$, $\MT=2$) antenna configuration at the relay,  the achievable rates of the QoS provisioning and  max-$\SUu$ improve up to $23\%$ and $31\%$, respectively, compared to ($\MR=2$, $\MT=6$) configuration. Finally, Figs.~\ref{fig:Rsum} depict that the achievable rate of  $\SUu$ and $\SUuu$  shows a floor at high power values for all AS schemes. This is expected because the inter-user interference and self-interference caused by the relay's FD operation will be maximal with high relay transmit power, which reduces the achievable rate.

\begin{figure}[tbp]
\begin{subfigure}[a]{0.32\textwidth}
\includegraphics[width=59mm, height=47mm]{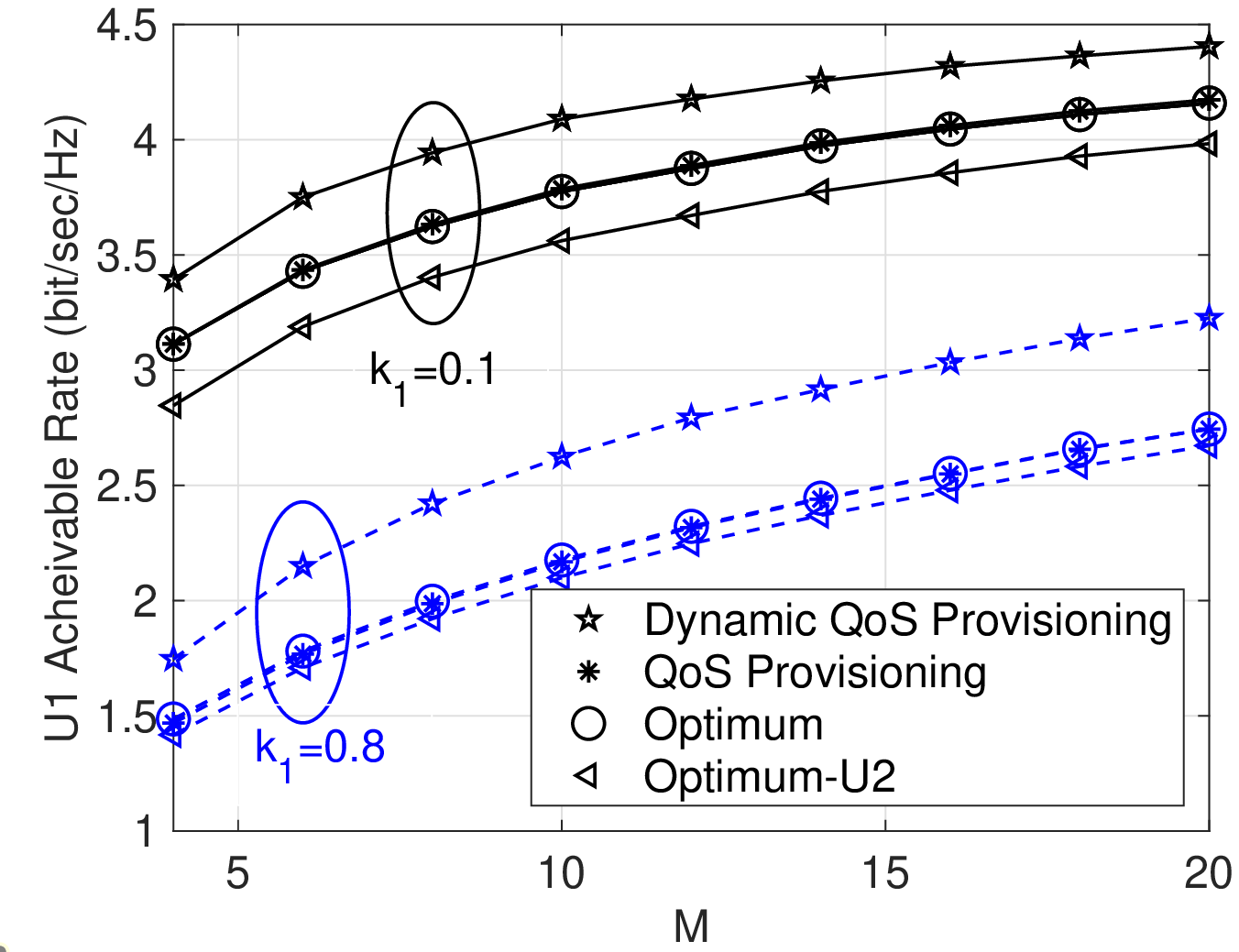}
\caption{$\SUu$.}
\label{fig:RU1d}
\end{subfigure}
\begin{subfigure}[a]{0.32\textwidth}
\includegraphics[width=59mm, height=47mm]{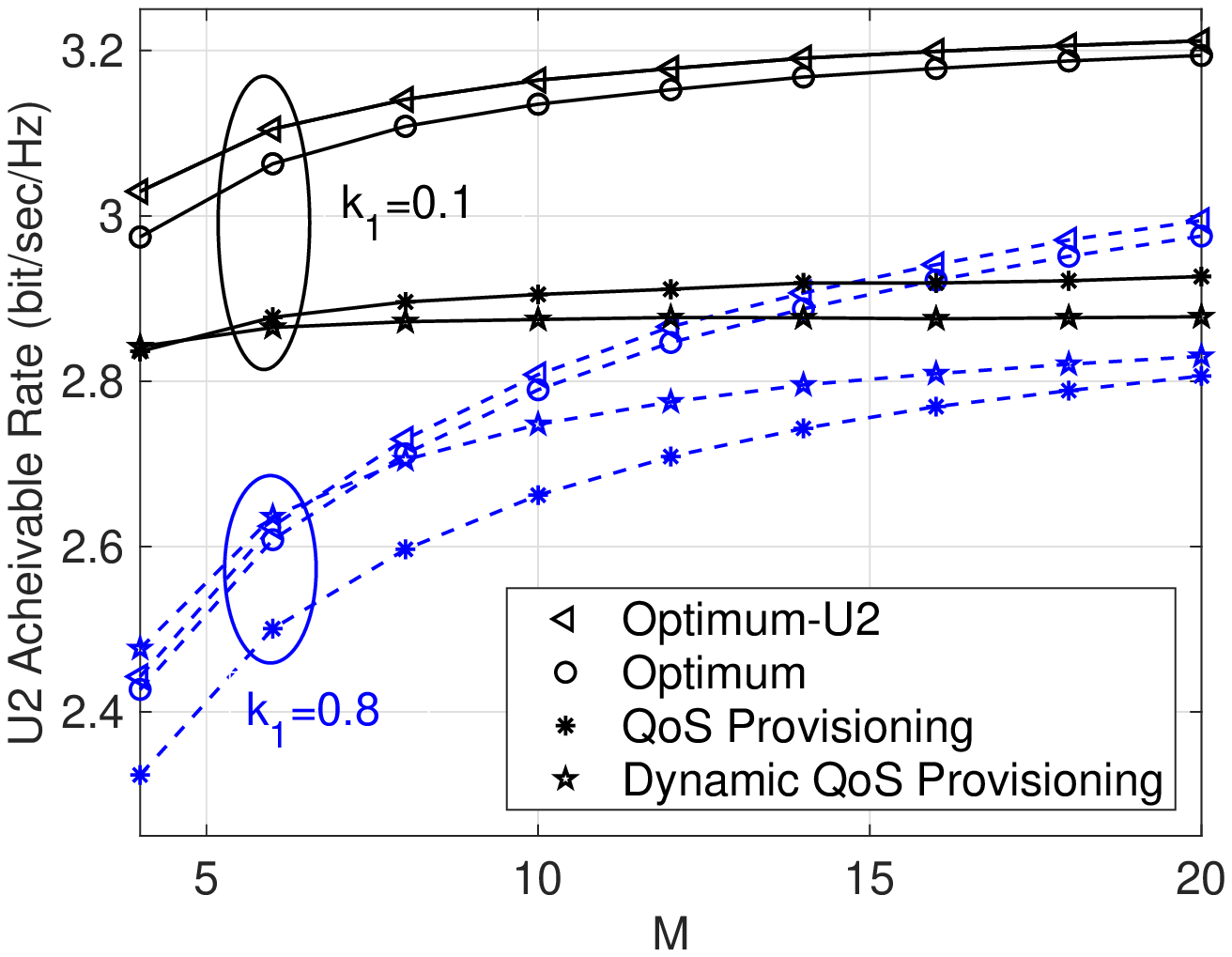}
\caption{$\SUuu$.}
\label{fig:RU2d}
\end{subfigure}
\begin{subfigure}[a]{0.32\textwidth}
\includegraphics[width=60mm, height=47mm]{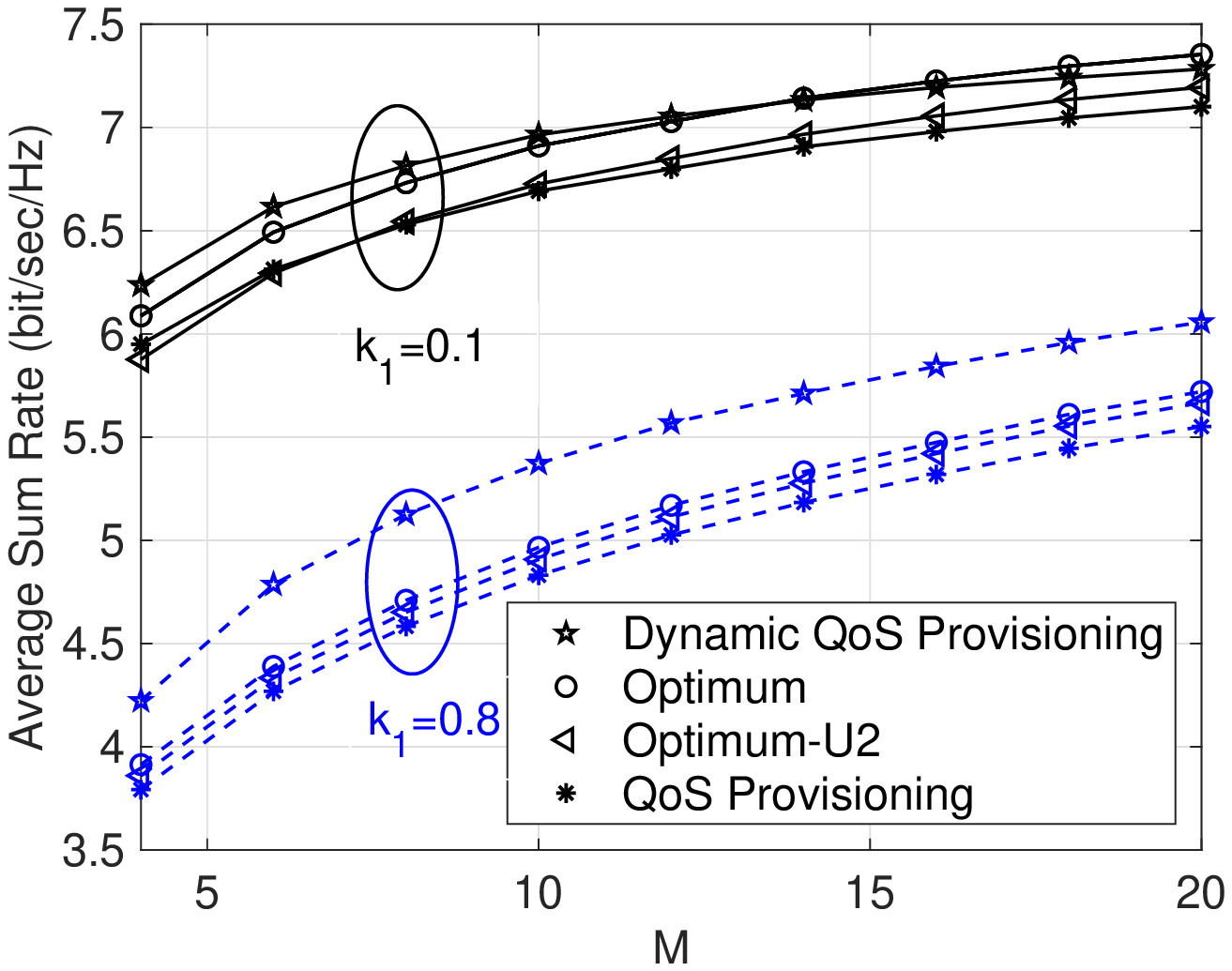}
\caption{Sum rate.}
\label{fig:RSUMd}
\end{subfigure}
\vspace{-0.5em}
\caption{The achievable rate versus different antenna numbers at $\Rly$ ($\MB=4$, $\MR=\MT=M/2$, $\mathcal{R}_2=1.5$ (bit/sec/Hz), $\rho_S=\rho_R=25$ dB).}
\label{fig:Rd}
\vspace{-2em}
\end{figure}
\vspace{-1em}
\subsection{Dynamic Antenna Clustering}

We evaluate the effect of dynamic antenna clustering at the relay on the achievable rate of $\SUu$, $\SUuu$, and sum rate.
Fig.~\ref{fig:Rd} shows the achievable rate of the proposed QoS provisioning AS scheme with dynamic antenna clustering and static antenna setup at the relay and for two different inter-user  interference scenarios: 1) weak interference ($k_1=0.1$), and 2) substantial interference ($k_1=0.8$). We have further included the optimum and optimum-$\SUuu$ AS scheme for comparison. In this example, we assume the number of relay's receive and transmit antenna to be $\MR=\MT=M/2$ for  QoS provisioning, optimum, and optimum-$\SUuu$ with static antenna configuration at the relay. In the case of QoS provisioning AS with dynamic antenna clustering, we set $\MR=\MR^*$ and $\MT=\MT^*$, determined by Algorithm 1. From Fig.~\ref{fig:RU1d}, it is evident that the QoS provisioning AS scheme with dynamic antenna clustering provides a more achievable rate than all other schemes for different interference levels. This can be explained as follows. According to the AS scheme criterion in~\eqref{eq:TwoStage:far},  this scheme instantly, after ensuring the target rate $\mathcal{R}_2$ for $\SUuu$, focuses on maximizing $\SUu$'s achievable rate. Therefore, this scheme provides more degrees of freedom in choosing the links between $\Rly$ and $\SUuu$, resulting in less severe interference at $\SUuu$. We can observe this trend from Fig.~\ref{fig:RU2d}, where the achievable rate of $\SUuu$ achieved by the QoS provisioning AS scheme (both static and dynamic) remains relatively constant by increasing $M$. Finally, we can see that the average sum rate of the QoS provisioning AS scheme with dynamic antenna clustering outperforms all schemes, especially in strong inter-user interference scenarios.

\vspace{-0.7em}
\section{Conclusion} \label{sec:Conc}
This paper has studied the AS problem for an FD cooperative NOMA system. Two low complexity AS schemes, namely max-$\SUu$ AS scheme and  max-$\SUuu$ AS scheme, were proposed, aiming to maximize the \emph{e2e} SINR at the near and far user, respectively. In order to satisfy the rate requirement of $\SUuu$ and maximize the achievable rate of $\SUuu$ at the same time,  we further proposed a QoS provisioning AS scheme with static and dynamic antenna clustering at the relay. Dynamic antenna clustering is applicable for the case $\Rly$ has access to an adaptively configured shared antenna with the ability to operate in either transmission or reception mode. Thus, we proposed dynamic antenna clustering and AS to configure the number of receive and transmit antennas at $\Rly$ to meet the rate requirement of the $\SUuu$ and maximize the achievable rate in $\SUuu$, leading to fairness in respect to near and far user's target rates. We have characterized the performance of proposed AS schemes in terms of the outage probability and achievable rate. Our results revealed that the QoS provisioning scheme with dynamic antenna clustering achieves up to $12\%$ and $10\%$ average sum rate gains compared with the QoS provisioning and optimum AS scheme with static antenna setup, respectively, in the intense interference scenarios.

Extending these results to the millimeter-wave massive MIMO systems would be interesting, including joint AS and hybrid beamforming design problems~\cite{Chintha:JSTSP:2018}. Moreover, machine learning techniques may also be leveraged to solve these problems.

\appendices
\section{Proof of Proposition~\ref{outageU1:S1S2}}
\label{Appendix:outageU1:S1S2}
By invoking~\eqref{eq:outnear}, we now proceed to derive $F_{\gamma_{1,\mathsf{AS}}}(\cdot)$ for max-$\SUu$ scheme. By inspecting~\eqref{eq:ASc near}, the ratio $\frac{\GSNu}{\GRNu+1}$ is maximized when the strongest BS-$\SUu$ channel and the weakest $\Rly$-$\SUu$ channel are selected. Therefore, the cdf of $\gamma_{1,\Ss}$ can be
evaluated as
\begin{align}~\label{eq:Fgam1S1}
F_{\gamma_{1,\Ss}}(x)=\int_{0}^{\infty} F_{A_1}\left((y+1)x\right)f_{B_1}(y)dy,
 \end{align}
 where $A_1$ is a RV defined as the maximum out of $\MB$ exponentially distributed independent RVs with parameter $\bGSNu$, while $B_1$ is the minimum out of $\MT$ exponentially distributed independent RVs with parameter $\bGRNu$. Substituting the cdf and the pdf of $A$ and $B$ into~\eqref{eq:Fgam1S1}, we arrive at
\begin{align}\label{eq:cdf:gam1:S1:int}
F_{\gamma_{1,\Ss}}(x) &= 1- \frac{\MB\MT}{\bGRNu}\sum_{p=0}^{\MB-1}
\frac{ (-1)^p\binom{\MB-1}{p} }
{(p+1)}\int_{0}^{\infty}e^{-\frac{(p+1)(y+1)x}{\bGSNu}}e^{-\frac{\MT }{\bGRNu}y}dy,
\end{align}
where we have used the fact $F_{A_1}\left(x\right)=\left(1-e^{-\frac{x}{\bGSNu}}\right)^{\MB}=1-\MB\sum_{p=0}^{\MB-1}\frac{(-1)^p\binom{\MB-1}{p}}{p+1} e^{-\frac{(p+1)x}{\bGSNu}}$.
Applying the integral identity~\cite[Eq. (3.310)]{Integral:Series:Ryzhik:1992}, the integral in~\eqref{eq:cdf:gam1:S1:int}
can be solved as
\vspace{-.1em}
\begin{align}\label{eq:cdf:gam1:S1}
F_{\gamma_{1,\Ss}}(x) &= 1- \MB\sum_{p=0}^{\MB-1}\!\!
\frac{ (-1)^p\binom{\MB-1}{p} e^{-\frac{(p+1)x}{\bGSNu}}}
{(p+1)\left(1 + \frac{(p+1)\bGRNu}{\MT \bGSNu}x\right)}.
\end{align}

Now we turn our attention to derive $F_{\gamma_{1,\Sss}}(\cdot)$. According to~\eqref{eq:ASc far} a single transmit antenna at $\Rly$ is selected  such that the received SNR at $\SUu$ is maximized. Moreover, a single transmit antenna at the BS is selected such that the SINR at BS-$\Rly$ is maximized. Therefore, $\GSNu$ and $\GRNu$  are exponential RVs with parameter $\bGSNu$ and $\bGRNu$, respectively. Therefore, by using the order statistics, we obtain
\vspace{-0.2em}
\begin{align}\label{eq:cdf:gam1:S2}
F_{\gamma_{1,\Sss}}(x) &= 1- \frac{e^{-\frac{x}{a_1\bGSNu}}}{1+\frac{\bGRNu}{a_1\bGSNu}x},
\end{align}

To this end, the desired result is obtained by evaluating~\eqref{eq:cdf:gam1:S1} and~\eqref{eq:cdf:gam1:S2} at $\zeta$.

\vspace{-0.6em}
\section{Proof of Proposition~\ref{outageU2:S1S2}}
\label{Appendix:outageU2:S1S2}
We first derive the outage probability of  $\SUuu$ with max-$\SUu$ antenna selection scheme. Based on~\eqref{eq:ASc near}, for the selected transmit antennas at the BS and $\Rly$, the ratio $\frac{a_2\GBRiij}{ a_1\GBRiij + \GSIkkj+1}$ can be maximized when the strongest BS-$\Rly$ channel and weakest SI channel are selected.\footnote{We notice that according to~\eqref{eq:ASc near}, transmit antennas at the BS and $\Rly$ are fixed as they have already selected to maximize the \emph{e2e} SINR  at $\SUu$.} However, theses two channels are coupled with each other through the selected antenna at the $\Rly$ input. Therefore, it is difficult, if not impossible, to find the cdf of $\frac{a_2\GBRiij}{ a_1\GBRiij + \GSIkkj+1}$. Alternatively, we propose to select the receive antenna at $\Rly$ such that $\GBRiij$ is maximized\footnote{In Section~\ref{sec:Num}, it is shown that this approximation  is a reasonable across the entire SNR range (cf. Optimum-$\SUuu$ AS scheme).}. Therefore,  we have
\vspace{-0.4em}
\begin{align*}
\Prob\left(\GMRS>x\right)=1-\int_{0}^{\infty} F_{A_2}\left((y+1)\cnox\right)f_{B_2}(y)dy,
\end{align*}
for $x<\frac{a_2}{a_1}$, where $A_2$ is a RV defined as the largest out of $\MR$ exponentially distributed independent RVs with parameter $\bGBR$, and since SI link is ignored, $B_2$ is an exponentially distributed RV with parameter $\bGSI$. Substituting the required cdf and the pdf and simplifying yields
\vspace{-0.5em}
\begin{align}\label{eq:cdf:GMRS}
\Prob\left(\GMRS\!>\!x\right)& = \MR\sum_{q=0}^{\MR-1}\frac{(-1)^q\binom{\MR-1}{q}e^{-\frac{(q+1)\cnox}{\bGBR}}}
{(q+1)\left(1+\frac{\bGSI}{\bGBR}(q+1)\cnox\right)}.
\end{align}

Moreover, since the $\Rly$-$\SUuu$ link is ignored, $\GRFus$ is an exponentially distributed RV with parameter $\bGRFu$ and thus we have
\vspace{-1.2em}
\begin{align}\label{eq:cdf:GRFus}
\Prob(\GRFus>x) = e^{-\frac{x}{\bGRFu}}.
 \end{align}
To this end by substituting~\eqref{eq:cdf:GMRS} and~\eqref{eq:cdf:GRFus} into~\eqref{eq:outfar} we arrive at~\eqref{eq:outnuS2}.

Now we turn our attention to derive $\PoutfuSS$. According to~\eqref{eq:ASc far} the \emph{e2e} SINR at $\SUuu$ is maximized when each term inside in the minimum function is maximized. Therefore, a transmit antenna at $\Rly$ is selected  such that $\GRFu$ is maximized, i.e., $\GRFu$ is the largest of  $\MT$ exponential RVs with parameter $\bGRFu$. Therefore, we get
\vspace{-1.1em}
\begin{align}\label{eq:cdf:GRFuss}
\Prob(\GRFuss>x) = \MT\!\sum_{q=0}^{\MT-1}
\frac{ (-1)^q\binom{\MT-1}{q} e^{-\frac{(q+1)\theta_2}{\bGRFu}}}
{(q+1)}.
 \end{align}

Moreover, since SI is the main source of performance degradation in FD mode, for a particular transmit antenna at $\Rly$, the best receive antenna at the $\Rly$ is selected such that the SI strength is minimized~\cite{Suraweera:TWC:2014}. Hence, $B_3\triangleq\GSIkkj$ is the minimum of $\MR$ exponential RVs with parameter $\bGSI$. Finally, for given $k^*$ and $j^{*}$, a best transmit antenna at the BS is selected such that $\gamma_{\mathtt{R}}$ is maximized. Hence, a single transmit antenna at the BS is selected such that the SINR at BS-$\Rly$ is maximized for the $j^*$-th receive antenna at $\Rly$, i.e., $A_3\triangleq\gamma_{\mathtt{SR}}^{i,j^*}$ is a RV defined as the maximum out of $\MB$ exponentially distributed independent RVs with parameter $\bGBR$. Therefore, we have
\begin{align}\label{eq:cdf:GMRSs}
\Prob\left(\GMRSS>x\right)& =1-\int_{0}^{\infty} F_{A_3}\left((y+1)\cnox\right)f_{B_3}(y)dy.
\end{align}

By using the required cdfs and  pdfs and simplifying yields
\begin{align}\label{eq:cdf:GMRSs}
\Prob\left(\GMRSS\!>\!x\right)&\! =\MB\sum_{p=0}^{\MB-1}\frac{(-1)^p\binom{\MB-1}{p}e^{-\frac{(p+1)\cno}{\bGBR}}}
{(p\!+\!1)\left(1\!+\!\frac{\bGSI}{\MR\bGBR}(p+1)\cno\right)}.
\end{align}

To this end by substituting~\eqref{eq:cdf:GRFuss} and~\eqref{eq:cdf:GMRSs} into~\eqref{eq:outfar} we arrive at~\eqref{eq:outfuS2}.

\section{Proof of Proposition~\ref{prop:Ru1u2:S1}}
\label{proof:prop:Ru1u2:S1}
It is notable that for a nonnegative RV $X$  the achievable rate in~\eqref{eq:R1R2:cdf}  can be expressed as~\cite{Mohammadi:TCOM:2015}
\begin{align}\label{eq:rate and cdf}
{R}_{\SUui}^{\mathtt{AS}}=
\frac{1}{\mathrm{ln} 2}
\int_{0}^{\infty}\frac{1-F_{\gamma_{u,\mathtt{AS}}}(x)}{1+x}dx.
\end{align}
Based on~\eqref{eq:rate and cdf}, the ergodic achievable rate can be derived using the cdfs given in~\eqref{eq:SINR at UE1} and~\eqref{eq:e-2-e far}. First, we calculate ${R}_{\SUu}^{\Ss}$. By substituting~\eqref{eq:cdf:gam1:S1} into~\eqref{eq:rate and cdf}, we have
\begin{align}\label{eq:R_SU1}
{R}_{\SUu}^{\Ss}=
\frac{\MB}{\mathrm{ln} 2}
\sum_{p=0}^{\MB-1}
\int_{0}^{\infty}
\frac{ (-1)^p\binom{\MB-1}{p} e^{-\frac{(p+1)x}{a_1\bGSNu}}}
{(p+1)\left(1 \!+\! \frac{(p+1)\bGRNu x}{\MT a_1\bGSNu}\right)(1+x)} dx.
\end{align}

To this end, using the integration identity in~\cite[Eq. (3.352.4)]{Integral:Series:Ryzhik:1992}, and after some algebraic manipulations, \eqref{eq:R1:S1} can be obtained.

In order to derive $\mathcal{R}_{\SUuu}^{\Ss}$, we need the cdf of $\gamma_{2,\Ss}$, which can be written as
\begin{align}\label{eq:cdf:gam2:def}
F_{\gamma_{2,\Ss}}(x) &= \Prob\left(\min\left(\gamma_{12,\Ss},\GMRS,\GRFus\right)<x\right)\nonumber\\
&=1 - \Prob\left(\gamma_{12,\Ss}>x\right)\Prob\left(\GMRS>x\right)\Prob\left(\GRFus>x\right).
\end{align}

Based on the AS metric in~\eqref{eq:ASc near}, for $x<\frac{a_2}{a_1}$, we have
\begin{align}\label{eq:cdf:gam12S1}
\Prob\left(\gamma_{12,\Ss}\!>\!x\right)
&=\frac{\MB\MT}{\bGRNu}\sum_{p=0}^{\MB-1}\frac{(-1)^p\binom{\MB-1}{p}e^{-\frac{(p+1)\cnox}
{\bGSNu}}}{p+1}\int_{0}^{\infty} e^{-\left(\frac{(p+1)\cnox }{\bGSNu}+\frac{\MT}{\bGRNu}\right)y} dy\nonumber\\
&\hspace{0em}=\MB\sum_{p=0}^{\MB-1}\frac{(-1)^p\binom{\MB-1}{p}e^{-\frac{(p+1)\cnox}{\bGSNu}}}
{(p+1)\left(1+\frac{\bGRNu}{\bGSNu}\frac{(p+1)\cnox}{\MT}\right)},
\end{align}
where the third equality follows since $f_{\GRNu}(y)=\frac{\MT}{\bGRNu}e^{-\frac{\MT}{\bGRNu}y}$ and $F_{\GSNu}(x) = (1-e^{-\frac{x}{\bGSNu}})^{\MB}$ and $F_{\GSNu}(x)$ can be expressed as
\begin{align}\label{eq:cdf:GSNu}
F_{\GSNu}(x) = 1-\MB\sum_{p=0}^{\MB-1}\frac{(-1)^p\binom{\MB-1}{p}}{p+1} e^{-\frac{(p+1)x}{\bGSNu}}.
\end{align}

Moreover, $\Prob\left(\GMRS>x\right)$ and $\Prob\left(\GRFus>x\right)$ are evaluated in~\eqref{eq:cdf:GMRS} and~\eqref{eq:cdf:GRFus}, respectively. Accordingly, $F_{\gamma_{2,\Ss}}(x)$ can be written as
\begin{align}\label{eq:cdf:gam2:S1}
F_{\gamma_{2,\Ss}}(x) &\!= 1 \!- \!\MR\MB e^{-\frac{x}{\bGRFu}}
\!\sum_{p=0}^{\MB-1}\!
\frac{(-1)^p\binom{\MB-1}{p}e^{-\frac{(p+1)\cnox}{\bGSNu}}}
{(p+1)\left(1\!+\frac{\bGRNu}{\bGSNu}\frac\cnox{\MT}\right)}
\!\sum_{q=0}^{\MR-1}\!
\frac{(-1)^q\binom{\MR-1}{q}e^{-\frac{(q+1)\cnox}{\bGBR}}}
{(q+1)\left(1\!+\frac{\bGSI}{\bGBR}(q+1)\cnox\right)}.
\end{align}

Now,  the ergodic rate of $\SUuu$ can be  obtained by substituting the derived $F_{\gamma_{2,\Ss}}$ in~\eqref{eq:cdf:gam2:S1} into~\eqref{eq:rate and cdf} and performing some algebraic manipulations. For $x>\frac{a_2}{a_1}$ it can be readily checked that $F_{\gamma_{2,\Ss}}(x)= 1$ and hence $\mathcal{R}_{\SUuu}^{\Ss}=0$.

%
%
%


\vspace{-0.4em}
\bibliographystyle{IEEEtran}

\end{document}